\newtheorem{theorem}{Theorem}[section]
\newtheorem{proposition}[theorem]{Proposition}
\newtheorem{lemma}[theorem]{Lemma}
\theoremstyle{definition}
\newtheorem{definition}[theorem]{Definition}
\DeclareMathOperator{\argmin}{argmin}
\DeclareMathOperator{\argmax}{argmax}
\newcommand{\mcut}{m_{\text{cut}}}
\newcommand{\Mcut}{M_{\text{cut}}}
\newcommand{\xx}{\mathbf{x}}
\title{\bf Consensus Halving for Sets of Items}
\author[1]{Paul W. Goldberg} 
\author[1]{Alexandros Hollender}
\author[2]{Ayumi Igarashi}
\author[3]{\\ Pasin Manurangsi}
\author[1]{Warut Suksompong}
\affil[1]{University of Oxford, UK}
\affil[2]{National Institute of Informatics, Japan}
\affil[3]{Google Research, USA}
\date{}
\begin{document}

\maketitle

\begin{abstract}
Consensus halving refers to the problem of dividing a resource into two parts so that every agent values both parts equally.
Prior work has shown that when the resource is represented by an interval, a consensus halving with at most $n$ cuts always exists, but is hard to compute even for agents with simple valuation functions.
In this paper, we study consensus halving in a natural setting where the resource consists of a set of items without a linear ordering.
When agents have additive utilities, we present a polynomial-time algorithm that computes a consensus halving with at most $n$ cuts, and show that $n$ cuts are almost surely necessary when the agents' utilities are drawn from probabilistic distributions.
On the other hand, we show that for a simple class of monotonic utilities, the problem already becomes PPAD-hard.
Furthermore, we compare and contrast consensus halving with the more general problem of consensus $k$-splitting, where we wish to divide the resource into $k$ parts in possibly unequal ratios, and provide some consequences of our results on the problem of computing small agreeable sets.
\end{abstract}

\section{Introduction}

Given a set of resources, how can we divide it between two families in such a way that every member of both families believes that the two resulting parts have the same value?
This is an important problem in resource allocation and has been addressed several times under different names \citep{Neyman46,HobbyRi65,Alon87}, with \emph{consensus halving} being the name by which it is best known today \citep{SimmonsSu03}.

In prior studies of consensus halving, the resource is represented by an interval, and the goal is to find an equal division into two parts that makes a small number of cuts in the interval.\footnote{\citet{SimmonsSu03} assume that the resource is a two- or three-dimensional object but only consider cuts by parallel planes; their model is therefore equivalent to that of a one-dimensional object.}
Using the Borsuk-Ulam theorem from topology, \citet{SimmonsSu03} established that for any continuous preferences of the $n$ agents involved, there is always a consensus halving that uses no more than $n$ cuts---this also matches the smallest number of cuts in the worst case.
In addition, the same authors developed an algorithm that computes an $\varepsilon$-approximate solution for any given $\varepsilon > 0$, meaning that the values of the two parts differ by at most $\varepsilon$ for every agent.
Although the algorithm is more efficient than a brute-force approach, its running time is exponential in the parameters of the problem.
This is in fact not a coincidence: \citet{FilosratsikasGo18} recently showed that $\varepsilon$-approximate consensus halving is PPA-complete, implying that the problem is unlikely to admit a polynomial-time algorithm.
\citet{FilosratsikasHoSo20} strengthened this result by proving that the problem remains hard even when the agents have simple valuations over the interval.
In particular, the PPA-completeness result holds for agents with ``two-block uniform'' valuations, i.e., valuation functions that are piecewise uniform over the interval and assign non-zero value to at most two separate pieces.

While these hardness results stand in contrast to the positive existence result, they rely crucially on the resource being in the form of an interval.
Most practical division problems do not fall under this assumption, including when we divide assets such as houses, cars, stocks, business ownership, or facility usage.
When each item is homogeneous, a consensus halving can be easily obtained by splitting every item in half.
However, since splitting individual assets typically involves an overhead, for example in managing a joint business or sharing the use of a house, we want to achieve a consensus halving while splitting only a small number of assets.
Fortunately, a consensus halving that splits at most $n$ items is guaranteed to exist regardless of the number of items---this can be seen by arranging the items on a line in arbitrary order and applying the aforementioned existence theorem of \citet{SimmonsSu03}.
The bound $n$ is also tight: if each agent only values a single item and the $n$ valued items are distinct, all of them clearly need to be split.
Nevertheless, given that the items do not inherently lie on a line, the hardness results from previous work do not carry over.
Could it be that computing a consensus halving efficiently is possible when the resource consists of a set of items?

\subsection{Overview of Results}

We assume throughout the paper that the resource is composed of $m$ items.
Each item is homogeneous, so the utility of an agent for a (possibly fractional) set of items depends only on the fractions of the $m$ items in that set.
For this overview we focus on the more interesting case where $n\leq m$, but all of our results can be extended to arbitrary $n$ and $m$.

We begin in Section~\ref{sec:additive} by considering agents with \emph{additive} utilities, i.e., the utility of each agent is additive across items and linear in the fraction of each item.
Under this assumption, we present a polynomial-time algorithm that computes a consensus halving with at most $n$ cuts by finding a vertex of the polytope defined by the relevant constraints.
This positive result stands in stark contrast with the PPA-hardness when the items lie on a line, which we obtain by discretizing an analogous hardness result of \citet{FilosratsikasHoSo20}.
We then show that improving the number of cuts beyond $n$ is difficult: even computing a consensus halving that uses at most $n-1$ cuts more than the minimum possible for a given instance is NP-hard.
Nevertheless, we establish that instances admitting a solution with fewer than $n$ cuts are rare.
In particular, if the agents' utilities for items are drawn independently from non-atomic distributions, it is almost surely the case that every consensus halving requires no fewer than $n$ cuts.

Next, in Section~\ref{sec:monotonic}, we address the broader class of \emph{monotonic} utilities, wherein an agent's utility for a set does not decrease when any fraction of an item is added to the set.
For such utilities, we show that the problem of computing a consensus halving with at most $n$ cuts becomes PPAD-hard, thereby providing strong evidence of its computational hardness.\footnote{We refer to \citep[Chapter~20]{Roughgarden16} for a discussion of the complexity class PPAD.}
Perhaps surprisingly, this hardness result holds even for the class of utility functions that we call ``symmetric-threshold utilities'', which are very close to being additive.
Indeed, such utility functions are additive across items; for each item, having a sufficiently small fraction of the item is the same as not having the item at all, having a sufficiently large fraction of it is the same as having the whole item, and the utility increases linearly in between.
On the other hand, we present a number of positive results for monotonic utilities when the number of agents is constant in Appendix~\ref{app:constant-n}.

In Section~\ref{sec:agreeable}, we provide some implications of our results on the ``agreeable sets'' problem studied by \citet{ManurangsiSu19}.
A set is said to be \emph{agreeable} to an agent if the agent likes it at least as much as the complement set.
Manurangsi and Suksompong proved that a set of size at most $\left\lfloor\frac{m+n}{2}\right\rfloor$ that is agreeable to all agents always exists, and this bound is tight.
They then gave polynomial-time algorithms that compute an agreeable set matching the tight bound for two and three agents.
We significantly generalize this result by exhibiting efficient algorithms for any number of agents with additive utilities, as well as any \emph{constant} number of agents with monotonic utilities.
In addition, we present a short alternative proof for the bound $\left\lfloor\frac{m+n}{2}\right\rfloor$ via consensus halving.

Finally, in Section~\ref{sec:k-splitting}, we study the more general problem of \emph{consensus $k$-splitting} for agents with additive utilities.
Our aim in this problem is to split the items into $k$ parts so that all agents agree that the parts are split according to some given ratios $\alpha_1,\dots,\alpha_k$; consensus halving corresponds to the special case where $k=2$ and $\alpha_1=\alpha_2=1/2$.
Unlike for consensus halving, however, in consensus $k$-splitting we may want to cut the same item more than once when $k>2$, so we cannot assume without loss of generality that the number of cuts is equal to the number of items cut.
For any $k$ and any ratios $\alpha_1,\dots,\alpha_k$, we show that there exists an instance in which cutting $(k-1)n$ items is necessary.
On the other hand, a generalization of our consensus halving algorithm from Section~\ref{sec:additive} computes a consensus $k$-splitting with at most $(k-1)n$ cuts in polynomial time, thereby implying that the bound $(k-1)n$ is tight for both benchmarks.
We also illustrate further differences between consensus $k$-splitting and consensus halving, both with respect to item ordering and from the probabilistic perspective.

\subsection{Related Work}

Consensus halving falls under the broad area of \emph{fair division}, which studies how to allocate resources among interested agents in a fair manner \citep{BramsTa96,BramsTa99,Moulin03}. 
Common fairness notions include \emph{envy-freeness}---no agent envies another agent in view of the bundles they receive---and \emph{equitability}---all agents have the same utility for their own bundle.
The fair division literature typically assumes that each recipient of a bundle is either a single agent or a group of agents represented by a single preference.
However, a number of recent papers have considered an extension of the traditional setting to groups, thereby allowing us to capture the differing preferences within the same group as in our introductory example with families \citep{ManurangsiSu17,Suksompong18,KyropoulouSuVo19,SegalhaleviNi19,SegalhaleviSu19,SegalhaleviSu20}.
Note that a consensus halving is envy-free for all members of the two groups; moreover, it is equitable provided that the utilities of the agents are additive and normalized so that every agent has the same value for the entire set of items.

A classical fair division algorithm that dates back over two decades is the \emph{adjusted winner procedure}, which computes an envy-free and equitable division between two agents \citep{BramsTa96}.\footnote{See http://www.nyu.edu/projects/adjustedwinner for a demonstration and implementation of the procedure.}
The procedure has been suggested for resolving divorce settlements and international border disputes, with one of its advantages being the fact that it always splits at most one item.
\citet{SandomirskiySe19} investigated the problem of attaining fairness while minimizing the number of shared items, and gave algorithms and hardness results for several variants of the problem.
Like in our work, both the adjusted winner procedure and the work of \citet{SandomirskiySe19} assume that items are homogeneous and, as in Section~\ref{sec:additive}, that the agents' utilities are linear in the fraction of each item and additive across items.
Moreover, both of them require the assumption that all items can be shared---if some items are indivisible, then an envy-free or equitable allocation cannot necessarily be obtained.\footnote{This motivates relaxations such as \emph{envy-freeness up to one item (EF1)} and \emph{envy-freeness up to any item (EFX)}, which have been extensively studied in the last few years (e.g., \citep{CaragiannisKuMo19,PlautRo20}). However, as \citet{SandomirskiySe19} noted, when a divorcing couple decides how to split their children or two siblings try to divide three houses between them, it is unlikely that anyone will agree to a bundle that is envy-free up to one child or house.}

Besides consensus halving, another problem that also involves dividing items into equal parts is \emph{necklace splitting}, which can be seen as a discrete analog of consensus halving
\citep{GoldbergWe85,AlonWe86,Alon87}.
In a basic version of necklace splitting, there is a necklace with beads of $n$ colors, with each color having an even number of beads.
Our task is to split the necklace using at most $n$ cuts and arrange the resulting pieces into two parts so that the beads of each color are evenly distributed between both parts.
Observe that the difficulty of this problem lies in the spatial ordering of the beads---the problem would be trivial if the beads were unordered items as in our setting.
While consensus halving and necklace splitting have long been studied by mathematicians, they recently gained significant interest among computer scientists thanks in large part to new computational complexity results \citep{FilosratsikasFrGo18,FilosratsikasGo18,FilosratsikasGo19,DeligkasFeMe19,AlonGr20,FilosratsikasHoSo20,FilosratsikasHoSo20b}.
In particular, the PPA-completeness result of \citet{FilosratsikasGo18} for approximate consensus halving was the first such result for a problem that is ``natural'' in the sense that its description does not involve a polynomial-sized circuit.

\section{Additive Utilities}
\label{sec:additive}

We first formally define the problem of consensus halving for a set of items.
There is a set $N=[n]$ of $n$~agents and a set $M=[m]$ of $m$ items, where $[r]:=\{1,2,\dots,r\}$ for any positive integer $r$.
A \emph{fractional set of items} contains a fraction $x_j\in[0,1]$ of each item $j$.
We will mostly be interested in fractional sets of items in which only a small number of items are fractional---that is, most items have $x_j=0$ or $1$.
Agent $i$ has a utility function $u_i$ that describes her nonnegative utility for any fractional set of items; for an item $j\in M$, we sometimes write $u_i(j)$ to denote $u_i(\{j\})$.
A \emph{partition of $M$ into fractional sets of items} $M_1,\dots,M_k$ has the property that for every item $j\in M$, the fractions of item $j$ in the $k$ fractional sets sum up to exactly $1$.

\begin{definition}
\label{def:consensus-halving}
A \emph{consensus halving} is a partition of $M$ into two fractional sets of items $M_1$ and $M_2$ such that $u_i(M_1) = u_i(M_2)$ for all $i\in N$.
An item is said to be \emph{cut} if there is a positive fraction of it in both parts of the partition.
\end{definition}

In this section, we assume that the agents' utility functions are \emph{additive}.
This means that for a set $M'$ containing a fraction $x_j$ of item $j$, the utility of agent $i$ is given by $u_i(M') = \sum_{j\in M}x_j\cdot u_i(j)$.
Observe that under additivity, $M'$ forms one part of a consensus halving exactly when 
\begin{align} \label{eq:halving-constraint}
\sum_{j\in M}x_j\cdot u_i(j) = \frac{1}{2}\sum_{j\in M} u_i(j) & & \forall i \in N.
\end{align}
As we mentioned in the introduction, a consensus halving with no more than $n$ cuts is guaranteed to exist regardless of the number of items.
Our first result shows that such a division can be found efficiently for additive utilities.

\begin{theorem}
\label{thm:polytope}
For $n$ agents with additive utilities, there exists a polynomial-time algorithm that computes a consensus halving with at most $\min\{n,m\}$ cuts.
\end{theorem}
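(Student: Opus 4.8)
The plan is to recast the problem as that of finding a vertex of a polytope, and to show that \emph{every} vertex is automatically a consensus halving with few cuts. Define
\[
P := \left\{ \xx \in [0,1]^m \;:\; \sum_{j\in M} x_j\, u_i(j) = \tfrac12 \sum_{j\in M} u_i(j) \ \text{ for all } i\in N \right\},
\]
the set of fractional sets of items (identified with their vectors of fractions) satisfying the halving constraint \eqref{eq:halving-constraint}. This is a rational polytope: it is bounded since $P \subseteq [0,1]^m$, and it is nonempty since the all-halves point $\xx = (\tfrac12,\dots,\tfrac12)$ clearly satisfies \eqref{eq:halving-constraint}. Hence $P$ has at least one vertex, and by the observation preceding the theorem any point of $P$ yields a consensus halving. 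It remains to bound the number of cuts at a vertex and to compute one efficiently.

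For the cut bound, I would use a standard counting of tight linearly independent constraints. At a vertex $\xx^\star$ of $P \subseteq \mathbb{R}^m$, there are $m$ linearly independent constraints that are tight. These are of two kinds: the $n$ equalities of \eqref{eq:halving-constraint}, whose coefficient vectors $(u_i(j))_{j\in M}$ span a subspace of dimension $r \le \min\{n,m\}$; and the active box constraints $x_j = 0$ or $x_j = 1$. Since $x_j=0$ and $x_j=1$ cannot both hold, the active box constraints involve distinct coordinates and are therefore linearly independent from one another, so to reach $m$ tight linearly independent constraints at least $m-r$ of the coordinates must be fixed to $0$ or $1$. Consequently at most $r \le \min\{n,m\}$ coordinates of $\xx^\star$ lie strictly between $0$ and $1$; that is, at most $\min\{n,m\}$ items are cut.

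For the algorithm I would compute a vertex of $P$ by the usual ``purification'' walk. Start at $\xx^{(0)} = (\tfrac12,\dots,\tfrac12)$. Given a current feasible $\xx^{(t)}$, let $A$ be the matrix whose rows are the coefficient vectors of the $n$ equalities together with the standard basis vectors $e_j$ for every $j$ with $x^{(t)}_j \in \{0,1\}$. If $\mathrm{rank}(A)=m$ we are at a vertex and stop. Otherwise choose a nonzero $d$ with $Ad=0$ and move $\xx^{(t)} \pm \theta d$ until some slack box constraint becomes tight; this is possible because $P$ is bounded, and the newly tight constraint is $e_j$ for some coordinate $j$ with $d_j \neq 0$, which is linearly independent of the rows of $A$ (since $e_j \cdot d = d_j \neq 0$ while $Ad = 0$). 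Hence $\mathrm{rank}(A)$ strictly increases, and the walk reaches a vertex after at most $m$ iterations, each of which solves a linear system and performs a ratio test over rationals of polynomially bounded bit-length. Alternatively, one may simply invoke any polynomial-time linear-programming routine that returns a basic feasible solution.

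I do not expect a genuine obstacle here: the statement is essentially elementary polyhedral combinatorics once the problem is phrased via $P$. The only points needing care are (i) checking that the active box constraints at a vertex really do involve distinct coordinates, so that the count ``$m-r$ coordinates fixed'' is valid, and (ii) controlling the bit-complexity of the iterates so that the running time is genuinely polynomial — both of which are routine since $P$ is a rational polytope with a polynomially bounded description.
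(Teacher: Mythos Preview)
Your proposal is correct and takes essentially the same approach as the paper: the paper explicitly remarks after its proof that ``the above algorithm can be viewed as finding a vertex of the polytope defined by the constraints~\eqref{eq:halving-constraint} and $0 \leq x_j \leq 1$,'' and its iterative procedure (find another solution to $S\cup T$, interpolate until a coordinate hits $\{0,1\}$, freeze it) is exactly your purification walk phrased in terms of solutions rather than kernel directions. The only cosmetic difference is that you lead with the vertex/rank-counting argument whereas the paper leads with the explicit algorithm and derives the cut bound from $|S\cup T|\ge m$.
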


\begin{proof}
If $n\geq m$, a partition that divides every item in half is clearly a consensus halving and makes $m=\min\{n,m\}$ cuts. 
We therefore assume from now on that $n\leq m$ and describe a polynomial-time algorithm that computes a consensus halving using no more than $n$ cuts.

The main idea of our algorithm is to start with the trivial consensus halving where $x_1 = x_2 = \dots = x_m = 1/2$, and then gradually reduce the number of cuts. 
We stop when the process cannot be continued, at which point we show that the consensus halving must contain at most $n$ cuts.
Our algorithm is presented below.
\begin{enumerate}
\item Let $x_1 = x_2 = \dots = x_m = 1/2$.
\item Let $S$ denote the set of $n$ equations $\sum_{j\in M}\left(y_j - \frac{1}{2}\right)\cdot u_i(j) = 0$ for $i \in N$, and let $T = \emptyset$.
\item While there exists a solution $(y_1, \dots, y_m) \ne (x_1, \dots, x_m)$ to $S \cup T$, do the following: \label{step:gaussian-elimination}
\begin{enumerate}
\item For every $j \in M$ such that $y_j \ne x_j$, compute
\begin{align*}
\gamma_j :=
\begin{cases}
\frac{1 - x_j}{y_j - x_j} & \text{ if } y_j > x_j; \\
\frac{x_j}{x_j - y_j} & \text{ if } y_j < x_j.
\end{cases}
\end{align*}
\item Let $j^* = \argmin_{j \in M, y_j \ne x_j} \gamma_j$.
\item For every $j \in M$, let $s_j := (1 - \gamma_{j^*}) \cdot x_j + \gamma_{j^*} \cdot y_j$, and update the value of $x_j$ to $s_j$. \label{step:linear-comb}
\item Add the equation $y_{j^*} = x_{j^*}$ to $T$.
\end{enumerate}
\item Output $(x_1, \dots, x_m)$.
\end{enumerate}
Finding a solution $(y_1,\dots,y_m)$ to $S \cup T$ that is not equal to $(x_1, \dots, x_m)$ or determining that such a solution does not exist (Step~\ref{step:gaussian-elimination}) can be done in polynomial time via Gaussian elimination.\footnote{Specifically, if the linear equations in $S \cup T$ lead to a unique solution $(x_1, \dots, x_m)$, then Gaussian elimination immediately results in this solution. Otherwise, Gaussian elimination will yield a row echelon form; by setting one of the non-pivots $y_j$ to be an arbitrary number not equal to $x_j$, we obtain a solution that is not equal to $(x_1, \dots, x_m)$.}
Moreover, it is obvious that the other steps of the algorithm run in polynomial time.

We next prove the correctness of our algorithm, starting with arguing that $(x_1, \dots, x_m)$ forms a consensus halving. Since we start with a consensus halving $x_1 = \cdots = x_m = 1/2$, it suffices to show that each execution of the loop in Step~\ref{step:gaussian-elimination} preserves the validity of the solution. Observe that, since both $(x_1, \dots, x_m)$ and $(y_1, \dots, y_m)$ are solutions to the equations~\eqref{eq:halving-constraint}, their convex combination (in Step~\ref{step:linear-comb}) also satisfies the equations~\eqref{eq:halving-constraint}. 
Furthermore, for each $j$ such that $y_j\neq x_j$, the value $\gamma_j$ is chosen so that if we replace $\gamma_{j^*}$ by $\gamma_j$ in the formula for $s_j$, we would have $s_j=1$ for the case $y_j>x_j$, and $s_j=0$ for the case $y_j<x_j$.
Since $\gamma_{j^*}\leq \gamma_j$, we have that $s_j\in[0,1]$ for all $j$ such that $y_j\neq x_j$.
In addition, the value of $x_j$ does not change for $j$ such that $y_j=x_j$.
Thus, $(x_1, \dots, x_m)$ remains a consensus halving throughout the algorithm.

Finally, we are left to show that at most $n$ items are cut in the output $(x_1, \dots, x_m)$. 
As noted above, our definition of $\gamma_j$ ensures that $x_{j^*} \in \{0, 1\}$ after the execution of Step~\ref{step:linear-comb}. Furthermore, as the constraint $y_{j^*} = x_{j^*}$ is then immediately added to $T$, the value of $x_{j^*}$ does not change for the rest of the algorithm. 
As a result, every item $j \in T$ is uncut. Thus, it suffices to show that $|T| \geq m - n$ at the end of the execution.

When the while loop in Step~\ref{step:gaussian-elimination} terminates, $(x_1, \dots, x_m)$ must be the unique solution to $S \cup T$. Recall that a system of linear equations with $m$ variables can only have a unique solution when the number of constraints is at least $m$. This means that $|S \cup T| \geq m$ at the end of the algorithm. Since $|S| = n$, we must have $|T| \geq m - n$, as desired. 
\end{proof}

Note that the above algorithm can be viewed as finding a vertex of the polytope defined by the constraints~\eqref{eq:halving-constraint} and $0 \leq x_j \leq 1$ for all $j \in M$. 
In fact, it suffices to use a generic algorithm for this task; however, to the best of our knowledge, such algorithms often involve solving a linear program, whereas the algorithm presented above is conceptually simple and can be implemented directly.
We also remark that our algorithm works even when some utilities $u_i(j)$ are negative, i.e., some of the items are goods while others are chores.
Allocating a combination of goods and chores has received increasing attention in the fair division community \citep{BogomolnaiaMoSa17,Segalhalevi18,AzizCaIg19}.

As we discussed in the introduction, an important reason behind the positive result in Theorem~\ref{thm:polytope} is the lack of linear order among the items.
Indeed, as we show next, if the items lie on a line and we are only allowed to cut the line using $n$ cuts, finding a consensus halving becomes computationally hard.
This follows from discretizing the hardness result of \citet{FilosratsikasHoSo20} and holds even if we allow the consensus halving to be approximate instead of exact.
Formally, when the items lie on a line, we may place a number of cuts, with each cut lying either between two adjacent items or at some position within an item.
All (fractional or whole) items between any two adjacent cuts must belong to the same fractional set of items in a partition, where the left and right ends of the line also serve as ``cuts'' in this requirement (see Figure~\ref{fig:consensus-halving-line} for an example).
We say that a partition into fractional sets of items $(M_1,M_2)$ is an \emph{$\varepsilon$-approximate consensus halving} if $|u_i(M_1)-u_i(M_2)|\leq \varepsilon\cdot u_i(M)$ for every agent~$i$.

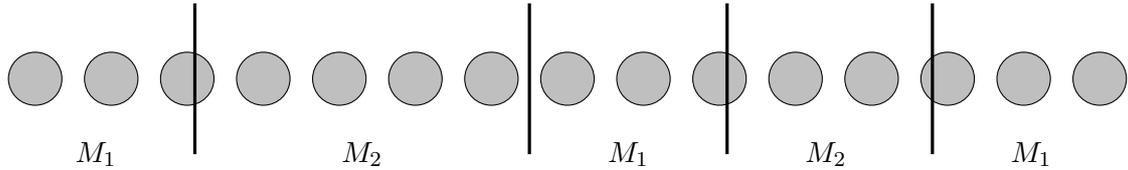
\begin{figure}
\centering
\begin{tikzpicture}

\filldraw[fill=lightgray] (0,0) circle (10pt);
\filldraw[fill=lightgray] (1,0) circle (10pt);
\filldraw[fill=lightgray] (2,0) circle (10pt);
\filldraw[fill=lightgray] (3,0) circle (10pt);
\filldraw[fill=lightgray] (4,0) circle (10pt);
\filldraw[fill=lightgray] (5,0) circle (10pt);
\filldraw[fill=lightgray] (6,0) circle (10pt);
\filldraw[fill=lightgray] (7,0) circle (10pt);
\filldraw[fill=lightgray] (8,0) circle (10pt);
\filldraw[fill=lightgray] (9,0) circle (10pt);
\filldraw[fill=lightgray] (10,0) circle (10pt);
\filldraw[fill=lightgray] (11,0) circle (10pt);
\filldraw[fill=lightgray] (12,0) circle (10pt);
\filldraw[fill=lightgray] (13,0) circle (10pt);
\filldraw[fill=lightgray] (14,0) circle (10pt);

\draw[very thick] (2.1,1) -- (2.1,-1);
\draw[very thick] (6.5,1) -- (6.5,-1);
\draw[very thick] (9.1,1) -- (9.1,-1);
\draw[very thick] (11.8,1) -- (11.8,-1);

\node at (0.8,-1) {$M_1$};
\node at (4.3,-1) {$M_2$};
\node at (7.8,-1) {$M_1$};
\node at (10.4,-1) {$M_2$};
\node at (13.1,-1) {$M_1$};

\end{tikzpicture}
\caption{Consensus halving for items on a line: in this example there are $15$ items (represented by gray balls) that lie on a line and we have used $4$ cuts to obtain a partition into fractional sets of items $(M_1,M_2)$. The labels $M_1$ and $M_2$ indicate the set to which each segment belongs.}
\label{fig:consensus-halving-line}
\end{figure}

\begin{theorem}
\label{thm:PPA-line}
Suppose that the items lie on a line. There exists a polynomial $p$ such that finding a $1/p(n)$-approximate consensus halving for $n$ agents with at most $n$ cuts on the line is PPA-hard, even if the valuations are binary and every agent values at most two contiguous blocks of items.
\end{theorem}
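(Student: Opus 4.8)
The plan is to reduce from the continuous version of the problem. By \citet{FilosratsikasHoSo20}, there is a polynomial $q$ such that finding a $1/q(n)$-approximate consensus halving with at most $n$ cuts for $n$ agents on the interval $[0,1]$ is PPA-hard, even when every agent's valuation is \emph{two-block uniform}: there are two disjoint subintervals on which the valuation density is a positive constant, and the density is $0$ elsewhere. First I would observe that, inspecting that construction, one may assume the endpoints of all these subintervals are rationals whose denominators are bounded by a polynomial in $n$ (this is automatic, since the hard instances are produced by a polynomial-time reduction and hence are described by polynomially many bits), and that the two blocks of each agent carry the same density (equivalently, each agent's valuation is already a scaled binary valuation at the continuous level). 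Totality of the target problem is not an issue: a consensus halving with at most $n$ cuts on a line exists by the existence result quoted in the introduction, so in particular an approximate one does.

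Given such a continuous instance, I would let $m$ be a common multiple of all the endpoint denominators --- still polynomial in $n$ --- and partition $[0,1]$ into $m$ equal-length subintervals, declaring each subinterval to be one item, ordered left to right along the line. Since every block endpoint is now a multiple of $1/m$, each agent's two blocks are exact unions of items; I define the binary valuation of agent $i$ by assigning value $1$ to every item contained in one of her two blocks and $0$ to every other item, so that agent $i$ values at most two contiguous blocks of items. The key point is that, because each item is homogeneous and each block has constant density, the discrete additive utility that agent $i$ assigns to a partition of the line into fractional sets of items coincides, up to the single global scaling factor relating densities to the value $1$, with the continuous utility she assigns to the corresponding partition of $[0,1]$ into alternating intervals. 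A partition of the discrete line using at most $n$ cuts is literally a partition of $[0,1]$ into at most $n+1$ alternating intervals, cuts placed strictly inside an item corresponding to cut points strictly inside the associated subinterval. Hence a $1/q(n)$-approximate solution of the discrete instance with at most $n$ cuts is, read off directly, a $1/q(n)$-approximate solution of the original continuous instance with at most $n$ cuts; taking $p := q$ then yields the theorem.

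The only part that needs genuine care is the alignment of the block endpoints to the grid, if one prefers not to invoke any special structure of the hard instances. In that case I would instead take $m$ large enough that $1/m$ is smaller than $1/q(n)$ times the minimum block length appearing in the instance (which is at least $1/\mathrm{poly}(n)$ in the constructions of \citet{FilosratsikasHoSo20}), round every block endpoint to the nearest multiple of $1/m$, and build the binary valuations from the rounded blocks. Rounding then perturbs each agent's value of each part, and of the whole set $M$, by at most $O(1/m)$ in absolute terms, so after normalizing by $u_i(M)$ an $\varepsilon$-approximate solution of the rounded discrete instance is an $O(\varepsilon)$-approximate solution of the continuous instance; re-defining $p$ to absorb this constant factor finishes the argument. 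I expect this bookkeeping --- tracking how the approximation guarantee degrades through rounding and normalization --- to be the only delicate step, with everything else being a direct translation between the interval and the grid.
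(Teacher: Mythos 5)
Your main approach matches the paper's proof: discretize the hard two-block-uniform instances of \citet{FilosratsikasHoSo20} onto a grid of size $d$ equal to the common denominator of all block endpoints (polynomial by inspection of their construction), make each grid cell an item with binary value, and read off solutions between the discrete and continuous instances. One small flaw worth flagging: your parenthetical claim that polynomial denominators are ``automatic'' because poly-time reductions produce instances described by polynomially many bits is incorrect --- a rational with polynomially many bits (such as $1/2^n$) can have an exponentially large denominator --- so, as both the paper and your own phrase ``inspecting that construction'' indicate, this bound must actually be verified from the specific construction, and the same caveat applies to the polynomial lower bound on block lengths needed by your rounding fallback.
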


\begin{proof}
We prove this by discretizing the hard instances constructed by \citet[Theorem~2]{FilosratsikasHoSo20}. 
In their setting there are $n$ agents who have piecewise-uniform valuation functions $v_1, \dots, v_n$ over the interval $[0,1]$.\footnote{This means that for each agent $i$, the interval $[0,1]$ can be partitioned into a finite number of intervals so that the density of the agent's valuation function is either $0$ or some constant $c_i$ over each interval.} 
By a closer inspection of their proof, we note that the instances they construct have some useful properties. Namely, there exist polynomials $p$ and $q$ such that:
\begin{enumerate}
	\item Every agent has a two-block uniform valuation on $[0,1]$, i.e., the density of the valuation function is piecewise-uniform and non-zero in at most two intervals. In other words, every agent has (at most) two blocks of value and they have the same height.
	\item There exists an integer $d \leq q(n)$ such that for all agents, the endpoints of the blocks are rational numbers with denominator $d$.
	\item Finding a $1/p(n)$-approximate consensus halving is PPA-hard.
\end{enumerate}

Using these properties, we can construct an equivalent instance in our setting. We position $m=d$ items on a line, where the $j$th item represents the interval $I_j := [(j-1)/d,j/d]$ in the original instance. 
Note that for every agent of the original instance, the density of their valuation function is constant over $I_j$ for each $j$. Thus, by letting
\begin{equation*}
u_i(j) = \left\{\begin{tabular}{ll}
$1$ & if $v_i((j-1)/d,j/d) > 0$; \\
$0$ & if $v_i((j-1)/d,j/d) = 0$
\end{tabular}
\right.
\end{equation*}
for all $i \in [n]$ and $j \in [d]$, we have exactly recreated the same valuation functions in our setting, up to normalization. 
In particular, any $1/p(n)$-approximate consensus halving of the items using at most $n$ cuts on the line immediately yields a $1/p(n)$-approximate consensus halving of $v_1, \dots, v_n$ using at most $n$ cuts on $[0,1]$, implying that our problem is also PPA-hard. 
\end{proof}

Although Theorem~\ref{thm:polytope} allows us to efficiently compute a consensus halving with no more than $n$ cuts in any instance, for some instances there exists a solution using fewer cuts.
An extreme example is when all agents have the same utility function, in which case a single cut already suffices.
This raises the question of determining the least number of cuts required for a given instance.
Unfortunately, when there is a single agent, deciding whether there is a consensus halving that leaves all items uncut is already equivalent to the well-known NP-hard problem \textsc{Partition}.
For general $n$, even computing a division that uses at most $n-1$ cuts more than the optimal solution is still computationally hard, as the following theorem shows.

\begin{theorem}
\label{thm:NP-partition}
For $n$ agents with additive utilities, it is NP-hard to compute a consensus halving that uses at most $n-1$ cuts more than the minimum number of cuts for the same instance.
\end{theorem}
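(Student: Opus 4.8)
The plan is to reduce from the NP-hard problem \textsc{Partition}: given positive integers $a_1,\dots,a_k$ with $A:=\sum_{l} a_l$, decide whether some subset sums to $A/2$. The guiding idea is to hand each of the $n$ agents her own private ``copy'' of the \textsc{Partition} instance and zero value for everything else, so that the cuts forced by different agents cannot be shared. A yes-instance will then admit a consensus halving with \emph{no} cuts, whereas a no-instance will force at least $n$ cuts, and this gap of $n$ strictly exceeds the permitted slack of $n-1$.

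Concretely, from a \textsc{Partition} instance I would build a consensus halving instance with $n$ agents and $nk$ items, partitioned into $n$ disjoint blocks $B_1,\dots,B_n$ of $k$ items each; agent $i$ assigns value $a_l$ to the $l$-th item of block $B_i$ and value $0$ to every item outside $B_i$. This construction is clearly polynomial-time. First I would observe that if the \textsc{Partition} instance is a yes-instance, witnessed by a subset $S$ with $\sum_{l\in S}a_l=A/2$, then placing in each block the items indexed by $S$ entirely in $M_1$ and the remaining items entirely in $M_2$ yields a consensus halving with $0$ cuts: each agent receives exactly half of her total value from her own block and zero from the others. Hence the minimum number of cuts for this instance is $0$.

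Next I would show that if the instance is a no-instance, then every consensus halving uses at least $n$ cuts. Fix a consensus halving $(M_1,M_2)$ and an agent $i$; since she values only items of $B_i$, her value for $M_1$ equals the contribution of $M_1\cap B_i$, which must equal $A/2$. If no item of $B_i$ were cut, this contribution would be a subset sum of $\{a_1,\dots,a_k\}$, which cannot equal $A/2$ since the instance is a no-instance; thus at least one item of $B_i$ is cut. As the blocks are pairwise disjoint, this yields at least $n$ cuts in total. (In fact exactly $n$ cuts suffice, via a greedy prefix-sum argument within each block, but only the lower bound is needed.)

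Combining the two cases, any algorithm that outputs a consensus halving using at most $n-1$ cuts more than the minimum would, on a yes-instance, output a halving with at most $0+(n-1)=n-1$ cuts, and on a no-instance a halving with at least $n$ cuts; counting the cuts in the output therefore decides \textsc{Partition} in polynomial time. I do not expect a genuine obstacle here, as this is a direct gadget reduction; the only point requiring care is the bookkeeping that makes the gap $n$ between the two cases strictly exceed the slack $n-1$, which the disjoint-block construction delivers automatically. This also recovers, at $n=1$, the stated equivalence between exact consensus halving for a single agent and \textsc{Partition}.
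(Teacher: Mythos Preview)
Your proposal is correct and follows essentially the same approach as the paper: both reduce from \textsc{Partition} by giving each of the $n$ agents her own disjoint copy of the item set, so that a yes-instance admits a cut-free consensus halving while a no-instance forces a distinct cut in each agent's block, creating a gap of $n$ that exceeds the allowed slack of $n-1$.
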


\begin{proof}
We reduce from the NP-hard problem \textsc{Partition}. 
Let $w_1, \dots, w_r$ be the integers that form a \textsc{Partition} instance. 
We construct a consensus halving instance $I$ with $n$ agents and a set of $n \times r$ items $M=\{(\ell,j) : \ell \in [n], j \in [r]\}$. Every agent values a distinct set of items according to the numbers $w_1, \dots, w_r$. Formally,
\begin{equation*}
u_i((\ell,j)) = \left\{\begin{tabular}{ll}
$w_j$ & if $\ell=i$; \\
$0$ & if $\ell \neq i$
\end{tabular}
\right.
\end{equation*}
for all $i,\ell \in [n]$ and $j \in [r]$. 
It is easy to see that this instance has the following properties:
\begin{enumerate}
	\item If $w_1, \dots, w_r$ can be partitioned into two sets of equal sum, then our instance $I$ admits a consensus halving using no cut.
	\item If $w_1, \dots, w_r$ cannot be partitioned into two sets of equal sum, then any consensus halving of our instance $I$ uses at least $n$ cuts. 
	This is because in that case, for every agent $i \in N$, at least one of the items $(i,1), \dots, (i,r)$ must be cut.
\end{enumerate}
As a result, in the first case, any consensus halving that uses at most $n-1$ cuts more than the minimum number of cuts will have at most $n-1$ cuts. 
In the second case, any consensus halving that uses at most $n-1$ cuts more than the minimum number of cuts will have at least $n$ cuts. 
Thus, \textsc{Partition} reduces to the problem of computing a consensus halving that uses at most $n-1$ cuts more than the minimum number of cuts.
\end{proof}

Theorem~\ref{thm:NP-partition} implies that there is no hope of finding a consensus halving with the minimum number of cuts or even a non-trivial approximation thereof in polynomial time, provided that P $\neq$ NP.
Nevertheless, we show that instances that admit a consensus halving with fewer than $n$ cuts are rare: if the utilities are drawn independently at random from probability distributions, then it is almost surely the case that any consensus halving needs at least $n$ cuts.
We say that a distribution is \emph{non-atomic} if it does not put positive probability on any single point.

\begin{theorem}
\label{thm:asymptotic-n-cuts}
Suppose that for each $i\in N$ and $j\in M$, the utility $u_i(j)$ is drawn independently from a non-atomic distribution $\mathcal{D}_{i,j}$.
Then, with probability $1$, every consensus halving uses at least $\min\{n,m\}$ cuts.
\end{theorem}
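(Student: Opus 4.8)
The plan is to show that the set of utility profiles admitting a consensus halving with fewer than $k := \min\{n,m\}$ cuts is a \emph{finite} union of probability-zero events. Suppose that for some profile there is a consensus halving $(x_1,\dots,x_m)$ whose set of cut items $C := \{j \in M : 0 < x_j < 1\}$ satisfies $|C| < k$. Write $U_j := (u_1(j),\dots,u_n(j)) \in \mathbb{R}^n$ for the ``item vector'' of item $j$, and split the halving constraints~\eqref{eq:halving-constraint} into the sums over $C$ and over $M \setminus C$ to obtain
\[
\sum_{j \in C}\Bigl(x_j - \tfrac12\Bigr) U_j \;=\; \sum_{j \in M \setminus C}\Bigl(\tfrac12 - x_j\Bigr) U_j .
\]
Since $|C| < m$, the set $M \setminus C$ is nonempty; and since every item outside $C$ is uncut, each coefficient $c_j := \tfrac12 - x_j$ on the right equals $\pm\tfrac12$, so all of them are nonzero. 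Hence the identity above implies that the vector $v := \sum_{j \in M \setminus C} c_j U_j$ lies in $\mathrm{span}\{U_j : j \in C\}$ for \emph{some} choice of signs $c_j \in \{\pm\tfrac12\}$ and \emph{some} $C \subseteq M$ with $|C| < k$.

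It therefore suffices to prove the following claim: with probability $1$, for \emph{every} $C \subseteq M$ with $|C| < \min\{n,m\}$ and \emph{every} $(c_j)_{j \in M \setminus C} \in \{\pm\tfrac12\}^{M \setminus C}$, the vector $\sum_{j \in M \setminus C} c_j U_j$ does \emph{not} lie in $\mathrm{span}\{U_j : j \in C\}$. Since there are only finitely many such pairs $(C,(c_j)_j)$, by a union bound it is enough to fix one of them and show that the corresponding ``bad'' event has probability $0$. For this, fix any $j_0 \in M \setminus C$ (possible since $|C| < m$) and condition on all the item vectors $U_j$ with $j \ne j_0$. After this conditioning, $W := \mathrm{span}\{U_j : j \in C\}$ is a deterministic linear subspace of dimension at most $|C|$ and $w := \sum_{j \in M \setminus C,\, j \ne j_0} c_j U_j$ is a deterministic vector, so (using $c_{j_0} \ne 0$) the bad event ``$\sum_{j \in M \setminus C} c_j U_j \in W$'' is exactly the event ``$U_{j_0} \in H$'', where $H := c_{j_0}^{-1}\bigl(W - w\bigr)$ is an affine subspace of $\mathbb{R}^n$ of dimension at most $|C| < n$.

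To finish, I would use that a fixed proper affine subspace $H \subsetneq \mathbb{R}^n$ is contained in a hyperplane $\{x \in \mathbb{R}^n : a^\top x = \beta\}$ with $a \ne 0$: choosing a coordinate $\ell$ with $a_\ell \ne 0$ and conditioning further on the coordinates $(u_i(j_0))_{i \ne \ell}$, membership $U_{j_0} \in H$ forces $u_\ell(j_0)$ to take one particular real value, which happens with probability $0$ because $u_\ell(j_0)$ is drawn from the non-atomic distribution $\mathcal{D}_{\ell,j_0}$ independently of all other coordinates. Taking expectations over the conditioning yields probability $0$ for the fixed pair $(C,(c_j)_j)$, and the union bound over the finitely many pairs proves the claim, hence the theorem.

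The argument is short, and the points that need care are: (i) the bookkeeping in the rearrangement — it is exactly the hypothesis $|C| < \min\{n,m\}$ that simultaneously makes $M \setminus C$ nonempty (so that there is a genuine $\pm\tfrac12$-combination of uncut item vectors on the right) and makes $H$ a \emph{proper} affine subspace of $\mathbb{R}^n$; and (ii) the fact that ``non-atomic'' does not entail ``absolutely continuous'', so one cannot merely invoke a density vanishing on a lower-dimensional set — this is why the last step is executed coordinate by coordinate, peeling off a single coordinate of $U_{j_0}$ at a time.
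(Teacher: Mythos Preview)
Your proof is correct and follows essentially the same approach as the paper's: both arguments take a union bound over the finitely many pairs (candidate cut set, $\{0,1\}$-assignment on the uncut items), rewrite the halving constraints accordingly, and then condition on all but one utility value to reduce the bad event to a single non-atomic variable hitting a prescribed point. The only cosmetic difference is that you organize the linear algebra via item vectors $U_j \in \mathbb{R}^n$ and a span/affine-subspace argument, whereas the paper works with the agent-indexed coefficient rows and invokes their linear dependence directly; the two viewpoints are dual and lead to the same conclusion.
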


\begin{proof}
The high-level idea is to show that if there are less than $\min\{n,m\}$ cuts, then a certain utility $u_i(j)$ needs to take on a specific value; this event occurs with probability $0$ since the distribution $\mathcal{D}_{i,j}$ is non-atomic.

Let $m_{\text{cut}} = \min\{n, m\} - 1$. 
Recall that a consensus halving corresponds to a tuple $(x_1, \dots, x_m) \in [0, 1]^m$ for which the constraint~\eqref{eq:halving-constraint} is satisfied, and that item $j$ is cut if and only if $x_j \notin \{0, 1\}$. 
As a result, from union bound, it suffices to show that for any fixed $M_{\text{cut}} \subseteq M$ of size $m_{\text{cut}}$, we have
\begin{align}
\Pr[\exists (x_1, \dots, x_m) \in [0, 1]^m \text{ that satisfies } \eqref{eq:halving-constraint} \text{ and } x_j \in \{0, 1\} \text{ for all }  j \notin M_{\text{cut}}] = 0. \label{eq:first-union-bound-asymptotic}
\end{align}
For notational convenience, we will only show that~\eqref{eq:first-union-bound-asymptotic} holds for $\Mcut = \{1, \dots, \mcut\}$; due to symmetry, the same bound also holds for every $\Mcut \subseteq M$ of size $\mcut$.

To show~\eqref{eq:first-union-bound-asymptotic} for $\Mcut = \{1, \dots, \mcut\}$, we may apply the union bound again to derive
\begin{align*}
&\Pr[\exists (x_1, \dots, x_m) \in [0, 1]^m \text{ that satisfies } \eqref{eq:halving-constraint} \text{ and } x_j \in \{0, 1\} \text{ for all }  j  \in \{\mcut + 1, \dots, m\}] \\
&\leq \sum_{t_{\mcut + 1}, \dots, t_{m} \in \{0, 1\}} \Pr[\exists x_1, \dots, x_{\mcut} \in  [0, 1] \text{ such that } (x_1, \dots, x_{\mcut}, t_{\mcut + 1}, \dots, t_m) \text{ satisfies } \eqref{eq:halving-constraint}].
\end{align*}
Hence, it suffices to show that, for any fixed $t_{\mcut + 1}, \dots, t_{m} \in \{0, 1\}$, we have 
\[
\Pr[\exists x_1, \dots, x_{\mcut} \in  [0, 1] \text{ such that } (x_1, \dots, x_{\mcut}, t_{\mcut + 1}, \dots, t_m) \text{ satisfies } \eqref{eq:halving-constraint}] = 0.
\]

To see that this is the case, consider any fixed values of $u_i(j)$ for all $i \in N, j \in \Mcut$; we will show that the above probability is $0$ over the randomness of the utilities $u_i(j)$ for $i\in N, j\notin \Mcut$.
We may rearrange the constraint~\eqref{eq:halving-constraint} as
\begin{align}
\sum_{j\in \Mcut} u_i(j) \cdot x_j = \frac{1}{2}\sum_{j\in \Mcut} u_i(j) + \sum_{j \notin \Mcut} \left(\frac{1}{2} - t_j\right) \cdot u_i(j) & & \forall i \in N. \label{eq:halving-constraint-rearranged}
\end{align}
Now, since there are $n$ linear equations and only $\mcut < n$ variables $x_1, \dots, x_{\mcut}$, the coefficient vectors $(u_1(1),\dots,u_1(\mcut)),\dots,(u_n(1),\dots,u_n(\mcut))$ must be linearly dependent. 
In other words, there exists $(a_1, \dots, a_n) \ne (0, \dots, 0)$ such that
\begin{align*}
\sum_{i \in N} a_i \cdot u_i(j) = 0 & & \forall j \in \Mcut.
\end{align*}
Hence, by taking the corresponding linear combination of~\eqref{eq:halving-constraint-rearranged}, we have
\begin{align*}
0 
&= \sum_{j\in \Mcut} x_j \left(\sum_{i\in N} a_i\cdot u_i(j)\right)  \\
&= \sum_{i\in N} a_i \left(\sum_{j\in \Mcut} x_j\cdot u_i(j)\right)  \\
&= \sum_{i \in N} a_i\left(\frac{1}{2} \cdot \sum_{j\in \Mcut} u_i(j) + \sum_{j \notin \Mcut} \left(\frac{1}{2} - t_j\right) \cdot u_i(j)\right).
\end{align*}
From $(a_1, \dots, a_n) \ne (0, \dots, 0)$, there exists $i^* \in N$ such that $a_{i^*} \ne 0$. 
Moreover, since $\mcut < m$, we have $m\notin\Mcut$.
The above equality therefore implies that
\begin{align*}
u_{i^*}(m) &= \frac{1}{\left(t_m - \frac{1}{2}\right)} \Bigg( \sum_{i \in N \atop i \ne i^*} \frac{a_i}{a_{i^*}}\left(\frac{1}{2} \cdot \sum_{j\in \Mcut} u_i(j) + \sum_{j \notin \Mcut} \left(\frac{1}{2} - t_j\right) \cdot u_i(j)\right) \\
& \qquad \qquad \qquad + \left(\frac{1}{2} \cdot \sum_{j\in \Mcut} u_{i^*}(j) + \sum_{j \notin \Mcut \atop j\ne m} \left(\frac{1}{2} - t_j\right) \cdot u_{i^*}(j)\right)\Bigg),
\end{align*}
where $t_m-1/2$ is nonzero because $t_m\in\{0,1\}$.
Since $\mathcal{D}_{i^*, m}$ is non-atomic and the utilities are drawn independently, the above equality occurs with probability 0, which implies that 
\begin{align*}
\Pr[\exists x_1, \dots, x_{\mcut} \in  [0, 1] \text{ such that } (x_1, \dots, x_{\mcut}, t_{\mcut + 1}, \dots, t_m) \text{ satisfies } \eqref{eq:halving-constraint}] = 0.
\end{align*}
As discussed, this in turn implies that the probability that there is a consensus halving with at most $\mcut$ cuts is $0$, concluding our proof.
\end{proof}

We now comment on the necessity of the two distributional assumptions in Theorem~\ref{thm:asymptotic-n-cuts}.
\begin{itemize}
\item Non-atomicity condition: Suppose $n=1$ and $\mathcal{D}_{1,j}$ is the Bernoulli distribution with $p=1/2$ for all $j\in M$, i.e., $u_1(j)=0$ and $u_1(j)=1$ with probability $1/2$ each.
Then the minimum number of cuts is $1$ if $u_i(j)=1$ for an odd number of $j$, and $0$ otherwise; the probability that each event occurs is $1/2$.
\item Independence condition: Suppose all agents have the same utility function, i.e., the dependence between the utilities is such that
$u_1(j)=\dots=u_n(j)$ for all $j\in[m]$.
In this case, it is clear that no more than one cut is needed regardless of $n$ and $m$.
\end{itemize}

As our final remark of this section, consider utility functions that are again additive across items, but for which the utility of each item scales \emph{quadratically} as opposed to linearly in the fraction of the item.
That is, for a set $M'$ containing a fraction $x_j$ of item $j$, the utility of agent $i$ is given by $u_i(M') = \sum_{j\in M}x_j^2\cdot u_i(j)$.
Even though these utility functions appear different from the ones we have considered so far, it turns out that the set of consensus halvings remains exactly the same.
Indeed, a partition $(M_1,M_2)$ is a consensus halving under the quadratic functions if and only if
\begin{align*}
\sum_{j\in M}x_j^2\cdot u_i(j) = \sum_{j\in M}(1-x_j)^2\cdot u_i(j) & & \forall i \in N.
\end{align*}
Since $x_j^2-(1-x_j)^2 = x_j-(1-x_j) = 2x_j - 1$, the above condition is equivalent to \eqref{eq:halving-constraint}, so all of our results in this section apply to the quadratic functions as well.

\section{Monotonic Utilities}
\label{sec:monotonic}

Next, we turn our attention to utility functions that are no longer additive as in Section~\ref{sec:additive}.
We assume that the utilities are \emph{monotonic}, meaning that the utility of an agent for a set of items cannot decrease upon adding any fraction of an item to the set.
Our main result is that finding a consensus halving is computationally hard for such valuations; in fact, the hardness holds even when the utilities take on a specific structure that we call \emph{symmetric-threshold}. Symmetric-threshold utilities are additive over items, and linear with symmetric thresholds within every item. Formally, the utility of agent $i$ for a fractional set of items $M'$ containing a fraction $x_j \in [0,1]$ of each item $j$ can be written as $u_i(M') = \sum_{j\in M} f_{ij}(x_j) \cdot u_i(j)$, where
\vspace{5mm}

\begin{minipage}{0.45\textwidth}
\begin{equation*}
f_{ij}(x_j) := \left\{\begin{tabular}{cl}
$0$ & if $x_j \leq c_{ij}$;\\
$\frac{x_j-c_{ij}}{1-2c_{ij}}$ & if $c_{ij} < x_j < 1-c_{ij}$;\\
$1$ & if $x_j \geq 1 - c_{ij}$,
\end{tabular}\right.
\end{equation*}
\end{minipage}
\begin{minipage}{0.5\textwidth}
\centering
\begin{tikzpicture}	
\draw[->] (0,0) -- (4.5,0);
\draw[->] (0,0) -- (0,2);
\draw[very thick] (0,0) -- (1,0) -- (3,1.5) -- (4,1.5);

\node at (4.8,-0.3) {$x_j$};
\node at (-0.7,2.2) {$f_{ij}(x_j)$};

\draw (-0.1,1.5) -- (0,1.5);
\node at (-0.3,1.5) {$1$};

\draw (1,0) -- (1,-0.1);
\node at (1,-0.43) {$c_{ij}$};

\draw (3,0) -- (3,-0.1);
\node at (3,-0.4) {$1-c_{ij}$};

\draw (4,0) -- (4,-0.1);
\node at (4,-0.35) {$1$};

\node at (-0.2,-0.2) {$0$};
		
\end{tikzpicture}
\end{minipage}
\vspace{5mm}

\noindent where $c_{ij} \in [0,1/2)$ is the \emph{threshold} or \emph{cap} of agent $i$ for item $j$. Intuitively, symmetric-threshold utilities model settings where having a small fraction of an item is the same as not having the item at all, while having a large fraction of the item is the same as having the whole item. The point where this threshold behavior occurs is controlled by the cap $c_{ij}$, which can be different for every pair $(i,j) \in N \times M$. It is easy to see that the resulting utility functions are indeed monotonic.
Note that although general monotonic utility functions do not necessarily admit a concise representation (see the discussion preceding Theorem~\ref{thm:agreeable-algo}), symmetric-threshold utility functions can be described succinctly.

Even though symmetric-threshold utility functions are very close to being additive, we show that finding a consensus halving for such utilities is computationally hard. Recall that a partition $(M_1,M_2)$ is an $\varepsilon$-approximate consensus halving if $|u_i(M_1)-u_i(M_2)|\leq \varepsilon\cdot u_i(M)$ for every agent~$i$.

\begin{theorem}
\label{thm:PPAD}
There exists a constant $\varepsilon > 0$ such that finding an $\varepsilon$-approximate consensus halving for $n$ agents with monotonic utilities that uses at most $n$ cuts is PPAD-hard, even if all agents have symmetric-threshold utilities.
\end{theorem}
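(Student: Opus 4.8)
The plan is to reduce from the generalized circuit problem $\varepsilon_0$-\textsc{Gcircuit}, which is PPAD-complete for some absolute constant $\varepsilon_0 > 0$; since every symmetric-threshold utility is monotonic, it suffices to establish hardness for symmetric-threshold utilities. The starting point is that the symmetric-threshold response function is point-symmetric about $(1/2,1/2)$, i.e.\ $f_{ij}(1-x) = 1 - f_{ij}(x)$, so for a symmetric-threshold agent the requirement $u_i(M_1) = u_i(M_2)$ collapses to the single piecewise-linear equation $\sum_{j\in M} f_{ij}(x_j)\,u_i(j) = \frac12\sum_{j\in M} u_i(j)$ in the fractions $x_j$. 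An $\varepsilon$-approximate consensus halving with at most $n$ cuts is therefore a vector $x\in[0,1]^m$ with at most $n$ non-integral coordinates that satisfies these $n$ piecewise-linear equations, each up to an additive slack of $\varepsilon\cdot\sum_j u_i(j)$. The reduction represents each node $v$ of the circuit by a dedicated item, whose fraction in $M_1$ --- after a fixed affine rescaling into a sub-interval such as $[1/4,3/4]$, so that the thresholds have room to operate --- encodes the value carried by the corresponding wire.

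For each gate I would construct a gadget consisting of a constant number of agents and auxiliary items that pins the output item to the prescribed function of the input items up to an error linear in the halving slack, using the threshold parameters $c_{ij}$ in two opposite regimes. Setting $c_{ij} = 0$ on the relevant items makes $f_{ij}$ the identity there, so the agent's equation is genuinely linear; this is used for the constant, scaling and copy gates and for the truncated addition and subtraction gates, with the truncations absorbed by auxiliary items and the box constraint as in the known interval-based consensus-halving circuit simulations. Setting $c_{ij}$ very close to $1/2$ makes the middle branch of $f_{ij}$ extremely steep, so that $f_{ij}$ behaves almost like a step function; this gives a brittle comparator, namely an agent that drives its output item close to $0$ or to $1$ according to the sign of a linear combination of the input fractions, which is what is needed to simulate the comparison gate. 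One verifies along the way that $\delta$-approximate satisfaction of an agent's equation corresponds to $O(\delta)$-approximate satisfaction of the associated gate, so the absolute constant of the source problem passes through.

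It then remains to make the cut accounting work. As in the proof of Theorem~\ref{thm:polytope}, a consensus halving with at most $n$ cuts for $n$ piecewise-linear equations is a vertex-type condition: on each linear piece of the functions $f_{ij}$, a solution with at most $n$ non-integral coordinates is a vertex of the corresponding polytope. The construction must be arranged so that, with one agent per gate plus a constant number of bookkeeping agents and items, the relevant vertices are exactly the ones in which the wire items are fractional and encode a solution of the circuit while the auxiliary items sit at their intended integral values. The instance always admits an (even exact) consensus halving with at most $n$ cuts by the existence result recalled in the introduction --- arrange the items on a line and apply the Simmons--Su theorem, using that symmetric-threshold utilities are continuous --- so the problem is total. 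Finally, from any $\varepsilon$-approximate consensus halving of the instance with at most $n$ cuts one reads the wire values off the wire items, and the vertex structure together with the gadget guarantees certifies an $O(\varepsilon)$-approximate solution of the circuit; choosing $\varepsilon$ to be a sufficiently small constant makes this an $\varepsilon_0$-approximate solution of the source instance, completing the reduction.

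The main obstacle I anticipate is precisely this last part: controlling the entire set of $n$-cut consensus halvings of the constructed instance, not just exhibiting one. Note that for symmetric-threshold utilities the all-halves vector $x_j \equiv 1/2$ satisfies every agent's equation exactly and merely uses $m$ rather than at most $n$ cuts, so all of the hardness has to come from the cut bound; intuitively, forcing the number of cuts down to $n$ collapses the continuum of consensus halvings into a discrete, vertex-like set whose combinatorics can be engineered to encode a directed (PPAD) rather than merely a parity (PPA) search problem. Making this precise --- so that the cut budget genuinely pins the auxiliary items and the constant gates, forbids degenerate vertices, and still leaves the comparator gadgets brittle enough to simulate comparison within a fixed constant error --- is where the real work lies; the linear gates and the affine encoding should be routine adaptations of existing interval-based constructions.
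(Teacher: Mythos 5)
Your high-level plan matches the paper's: reduce from a generalized circuit problem, represent each wire by an item, encode the wire value in the fraction of that item, and exploit the point-symmetry $f_{ij}(1-x) = 1 - f_{ij}(x)$ of symmetric-threshold utilities. You also correctly identify where the difficulty lies: without the cut budget the all-halves vector satisfies every agent, so the hardness has to come from forcing only $n$ items to be fractional, and one must control \emph{all} $n$-cut consensus halvings, not merely exhibit one. But you stop exactly at that point, gesturing at a ``vertex-type condition'' on the piecewise-linear polytope; this does not by itself pin anything down, and it is not how the paper does it.

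The paper's mechanism is simpler and much more direct. Each gate is assigned one (or, for addition, two) agents, and each such agent $i$ has a designated output item $j$ whose weight satisfies $u_i(j) > u_i(M\setminus\{j\}) + \varepsilon\cdot u_i(M)$. By monotonicity this forces $x_j\in(0,1)$ in any $\varepsilon$-approximate consensus halving, i.e., the output item of every gate is \emph{necessarily} cut. There are exactly $n$ such designated items, one per agent, so under the cut budget of $n$ every other item, including a reserved ``special item,'' must sit at $0$ or $1$. The special item then supplies the constant $\pm 1$. This utility-dominance argument is the missing ingredient in your write-up; the polytope-vertex framing cannot replace it because it says nothing about \emph{which} coordinates end up non-integral.

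Two further deviations are worth noting. First, your steep-threshold comparator ($c_{ij}$ close to $1/2$) is not what the paper does and is delicate: the slope $1/(1-2c_{ij})$ blows up, so $\delta$-slack in the halving equation no longer translates to $O(\delta)$ error at the wire, and you have to carefully tie the threshold to the target constant $\varepsilon_0$. The paper avoids comparators in the consensus-halving gadgets entirely: it first reduces $\varepsilon$-\textsc{Gcircuit} to a ``simple'' circuit problem over $[-1,1]$ using only three gate types --- constant $1$, truncated addition, and multiplication by $-|\zeta|$ for $\zeta\in(0,1]$ --- and simulates comparison there by repeated doubling. Second, the choice of $G_{\times-|\zeta|}$ (negative scaling) is not incidental: with $c=0$ on the output and a fixed-constant threshold $c=1/3$ on the input, the consensus-halving equation naturally produces $\mathrm{val}(j)\approx -\zeta\cdot\mathrm{val}(j_1)$, so negation is free; addition is then realized with two agents by a double negation. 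Your proposal would need general signed scaling and a separate comparator, both of which the paper's reformulation of the source problem makes unnecessary.
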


\begin{proof}
We prove this result by reducing from a modified version of the \emph{generalized circuit} problem. The generalized circuit problem is the main tool that has been used (implicitly or explicitly) to prove hardness of computing Nash equilibria in various settings \citep{ChenDeTe09,DaskalakisGoPa09,Rubinstein18}. A generalized circuit is a generalization of an arithmetic circuit, because it allows \emph{cycles}, which means that instead of a simple computation, the circuit now represents a constraint satisfaction problem. The version of the problem we use is different from the standard one in two aspects. First, instead of the domain $[0,1]$, we use $[-1,1]$, which is more adapted to the consensus halving problem. Second, we will only allow the circuit to use three types of arithmetic gates. As we will show below, these modifications do not change the complexity of the problem.

Formally, we consider the following simplified generalized circuits.

\begin{definition}
A \emph{simple generalized circuit} is a pair $(V,\mathcal{T})$, where $V$ is a set of nodes and $\mathcal{T}$ is a set of gates. Every gate $T \in \mathcal{T}$ is a $5$-tuple $T=(G,u_1,u_2,v,\zeta)$ where $G \in \{G_+,G_{\times-|\zeta|},G_1\}$ is the type of gate, $u_1,u_2$ are the input nodes (if applicable), $\zeta \in (0,1]$ is the parameter (if applicable), and $v$ is the output node. In more detail:
\begin{itemize}
	\item if $G=G_+$, then $u_1, u_2, v \in V$ (distinct) and $\zeta = nil$,
	\item if $G=G_{\times-|\zeta|}$, then $u_1,v \in V$ (distinct), $u_2 = nil$ and $\zeta \in (0,1]$,
	\item if $G=G_1$, then $u_1 = u_2 = \zeta = nil$ and $v \in V$.
\end{itemize}
We require that for any two gates $T=(G,u_1,u_2,v,\zeta)$ and $T'=(G',u_1',u_2',v',\zeta')$ in $\mathcal{T}$ with $T \neq T'$, it holds that $v \neq v'$.
\end{definition}

Before we proceed, let us introduce some notation. We let $T_{[-1,1]} : \mathbb{R} \to [-1,1]$ denote \emph{truncation} to $[-1,1]$, i.e., $T_{[-1,1]}(x) = \max\{-1, \min\{1, x\} \}$. Similarly, we also let $T_{[0,1]}$ denote truncation to $[0,1]$. Finally, we use the notation $x = y \pm z$ as a shorthand for $|x-y| \leq z$.

\begin{definition}
Let $\varepsilon > 0$. The problem \textsc{$\varepsilon$-simple-Gcircuit} is defined as follows: given a simple generalized circuit $(V,\mathcal{T})$, find an assignment $\xx : V \to [-1,1]$ that $\varepsilon$-approximately satisfies all the gates $T=(G,u_1,u_2,v,\zeta)$ in $\mathcal{T}$, namely:
\begin{itemize}
	\item if $G=G_+$, then $\xx[v] = T_{[-1,1]}(\xx[u_1] + \xx[u_2]) \pm \varepsilon$, \hfill (\emph{addition})
	\item if $G=G_{\times-|\zeta|}$, then $\xx[v] = - |\zeta| \cdot \xx[u_1] \pm \varepsilon$, \hfill (\emph{multiplication by $-|\zeta|$ for $\zeta \in (0,1]$})
	\item if $G=G_1$, then $\xx[v] = 1 \pm \varepsilon$. \hfill (\emph{constant $1$})
\end{itemize}
\end{definition}

As mentioned earlier, it turns out that this modified version of the generalized circuit problem is also PPAD-hard. This can be proved by reducing from the standard \textsc{$\varepsilon$-Gcircuit} problem, which was shown to be PPAD-hard even for constant $\varepsilon$ by \citet{Rubinstein18}. The idea is that these simple gates are enough to simulate all the gates in the standard version of the problem. Both problems are in fact PPAD-complete, since they can be reduced to the problem of finding an approximate Brouwer fixed point, but here we are only interested in the hardness. 

\begin{lemma}\label{lem:simple-g-circuit}
There exists a constant $\varepsilon > 0$ such that the \textsc{$\varepsilon$-simple-Gcircuit} problem is PPAD-hard.
\end{lemma}

The proof of Lemma~\ref{lem:simple-g-circuit} can be found in Appendix~\ref{app:proof-g-circuit}.

Let $\widehat{\varepsilon} > 0$ be a constant for which the \textsc{$\widehat{\varepsilon}$-simple-Gcircuit} problem is PPAD-hard. We will now show that the \textsc{$\widehat{\varepsilon}$-simple-Gcircuit} problem reduces to the problem of finding an $\varepsilon$-approximate consensus halving for $n$ agents with symmetric-threshold utilities that uses at most $n$ cuts.

Let $(V,\mathcal{T})$ be an instance of \textsc{$\widehat{\varepsilon}$-simple-Gcircuit}. Partition $V$ into four sets $V_0 \cup V_+ \cup V_\times \cup V_1$, where
\begin{itemize}
	\item $V_0$ contains every node that is not the output of any gate in $\mathcal{T}$,
	\item $V_+$ contains every node that is the output of a $G_+$ gate in $\mathcal{T}$,
	\item $V_\times$ contains every node that is the output of a $G_{\times-|\zeta|}$ gate in $\mathcal{T}$,
	\item $V_1$ contains every node that is the output of a $G_1$ gate in $\mathcal{T}$.
\end{itemize}

We construct a consensus halving instance with $n=2|V_+| + |V_\times| + |V_1|$ agents and $m=|V_0| + 2|V_+| + |V_\times| + |V_1| + 1$ items. For any node $v \in V_+ \cup V_\times \cup V_1$, let $i(v) \in N = [n]$ denote the corresponding agent, and for every $v \in V_+$, let $i'(v) \in N$ denote the second corresponding agent. For every $v \in V$, let $j(v) \in M = [m]$ denote the corresponding item, and for every $v \in V_+$, let $j'(v) \in M$ denote the second corresponding item. Finally, let $j^* \in M$ denote the single remaining item, which we call the \emph{special item}.

It remains to specify the utility functions for the agents and the constant $\varepsilon > 0$. We will see below that in any partition of $M$ into two fractional sets of items $(M_1, M_2)$, there is a simple way to associate a value $\text{val}(j) \in [-1,1]$ to every item $j \in M$. We will pick the agents' utilities so that in any $\varepsilon$-approximate consensus halving (with at most $n$ cuts), these values must satisfy the gate constraints in~$\mathcal{T}$.

\paragraph{Value Encoding.}
Consider any partition of $M$ into two fractional sets of items $(M_1, M_2)$. Let $x_j \in [0,1]$ denote the fraction of item $j$ in $M_1$. This fraction $x_j \in [0,1]$ encodes a number $\text{val}(j) \in [-1,1]$ as follows:
\begin{equation*}
\text{val}(j) = \left\{\begin{tabular}{cl}
$-1$ & if $x_j \leq 1/3$;\\
$6(x_j-1/2)$ & if $1/3 < x_j < 2/3$;\\
$1$ & if $x_j \geq 2/3$.
\end{tabular}\right.
\end{equation*}
In other words, $\text{val}(j) = T_{[-1,1]}(6x_j-3)$.

The main idea of the reduction is that the value $\xx[v]$ of node $v \in V$ will be given by $\text{val}(j(v))$. Next, we show how to pick the utility functions in order to enforce the gate constraints in $\mathcal{T}$. In the construction below we assume that $\varepsilon \leq 1/10$; the exact value of $\varepsilon$ will be picked at the end.

\paragraph{$G_{\times -|\zeta|}$ gates.} For any gate $(G_{\times -|\zeta|}, u_1, nil, v, \zeta) \in \mathcal{T}$, where $u_1 \in V \setminus \{v\}$, $v \in V_\times$ and $\zeta \in (0,1]$, we do the following. Let $j_1=j(u_1)$, $j=j(v)$ and $i=i(v)$. We want to ensure that in any solution to $\varepsilon$-approximate consensus halving, we have $\text{val}(j) = -\zeta \cdot \text{val}(j_1) \pm \widehat{\varepsilon}$. To achieve this we define the symmetric-threshold utility function of agent $i$ as follows. For any item $\ell \notin \{j_1,j\}$, we let $u_i(\ell)=0$ and $c_{i\ell}=0$. We let $u_i(j)=1/\zeta$ and $c_{ij}=0$. For $j_1$ we use what we call a \emph{standard input utility function}, which is defined as follows: $u_i(j_1)=1/3$ and $c_{ij_1}=1/3$. Note that $u_i(M)=1/3+1/\zeta$.

Consider any $\varepsilon$-approximate consensus halving $(M_1,M_2)$. Then, it must hold that $u_i(M_1) = u_i(M_2) \pm \varepsilon \cdot u_i(M)$. First of all, since $u_i(j) > u_i(M \setminus \{j\}) + \varepsilon \cdot u_i(M)$ and by monotonicity, this implies that item $j$ must be fractional in the partition $(M_1,M_2)$, i.e., $x_j \in (0,1)$. Furthermore, we must have
$$f_{ij_1}(x_{j_1}) u_i(j_1) + f_{ij}(x_j) u_i(j) = f_{ij_1}(1-x_{j_1}) u_i(j_1) + f_{ij}(1-x_j) u_i(j) \pm \varepsilon \cdot u_i(M).$$
Since $f_{i\ell}(1-x)=1-f_{i\ell}(x)$ for any $x \in [0,1]$ and $\ell \in M$, this equation can be rewritten as
$$(2f_{ij}(x_j) - 1) u_i(j) = -(2f_{ij_1}(x_{j_1})-1) u_i(j_1) \pm \varepsilon \cdot u_i(M).$$
By noting that $f_{i\ell}(x) = T_{[0,1]}((x-c_{i\ell})/(1-2c_{i\ell}))$, we obtain
$$(2x_j - 1) \cdot (1/\zeta) = -(2T_{[0,1]}(3x_{j_1}-1)-1) \cdot (1/3) \pm \varepsilon \cdot u_i(M).$$
Finally, by observing that $2T_{[0,1]}(3x_{j_1}-1)-1 = T_{[-1,1]}(2(3x_{j_1}-1)-1) = T_{[-1,1]}(6x_{j_1} - 3) = \text{val}(j_1)$, we obtain
$$(6x_j-3) = - \zeta \cdot \text{val}(j_1) \pm 3 \zeta \varepsilon \cdot u_i(M).$$
Now, this yields
$$\text{val}(j) = T_{[-1,1]}(6x_j-3) = T_{[-1,1]}(- \zeta \cdot \text{val}(j_1)) \pm 3 \zeta \varepsilon \cdot u_i(M) = - \zeta \cdot \text{val}(j_1) \pm 4\varepsilon$$
where we used the fact that $- \zeta \cdot \text{val}(j_1) \in [-1,1]$, $u_i(M)=1/3+1/\zeta$ and $\zeta \leq 1$. Thus, as long as $4\varepsilon \leq \widehat{\varepsilon}$, this construction correctly enforces the gate constraint.

\paragraph{$G_1$ gates.} For any gate $(G_1, nil, nil, v, nil) \in \mathcal{T}$, where $v \in V_1$, we do the following. Let $j=j(v)$ and $i=i(v)$. We use the same construction as for $G_{\times -|\zeta|}$ gates with $j_1=j^*$ (the special item) and $\zeta = 1$. By the same arguments, it follows that in any $\varepsilon$-approximate solution it must hold that $\text{val}(j) = - \text{val}(j^*) \pm 4\varepsilon$ and item $j$ must be fractional, i.e., $x_j \in (0,1)$. Thus, as long as $4\varepsilon \leq \widehat{\varepsilon}$ and $\text{val}(j^*)=-1$, this correctly enforces the gate constraint.

\paragraph{$G_+$ gates.} For any gate $(G_+, u_1, u_2, v, nil) \in \mathcal{T}$, where $u_1 \in V \setminus \{v\}$, $u_2 \in V \setminus \{v,u_1\}$ and $v \in V_+$, we do the following. Let $j_1=j(u_1)$, $j_2=j(u_2)$, $j=j(v)$ and $j'=j'(v)$. We are going to ensure that $\text{val}(j') = -T_{[-1,1]}(\text{val}(j_1) + \text{val}(j_2)) \pm \widehat{\varepsilon}/2$ and $\text{val}(j) = -\text{val}(j') \pm \widehat{\varepsilon}/2$. Together, these two constraints will enforce the gate constraint. The second constraint can easily be enforced by using the same construction as for $G_{\times -|\zeta|}$ with $j_1=j'(v)$, $j=j(v)$, $i=i(v)$ and $\zeta = 1$. By the same arguments, this will yield an error of at most $\widehat{\varepsilon}/2$, as long as $8\varepsilon \leq \widehat{\varepsilon}$, and ensure that item $j$ is fractional.

To enforce the first constraint, we define the utilities of agent $i'=i'(v)$ as follows. For any item $\ell \notin \{j_1,j_2,j'\}$, we let $u_{i'}(\ell)=0$ and $c_{i'\ell}=0$. We let $u_{i'}(j')=1$ and $c_{i'j'}=0$. For $j_1$ and $j_2$ we use the standard input utility function as defined earlier. Note that $u_{i'}(M)=5/3$.

Consider any $\varepsilon$-approximate consensus halving $(M_1,M_2)$. Then, it must hold that $u_{i'}(M_1) = u_{i'}(M_2) \pm \varepsilon \cdot u_{i'}(M)$. First of all, since $u_{i'}(j') > u_{i'}(M \setminus \{j'\}) + \varepsilon \cdot u_{i'}(M)$ and by monotonicity, this implies that item $j'$ must be fractional in the partition $(M_1,M_2)$, i.e., $x_{j'} \in (0,1)$. Furthermore, by the same arguments as for $G_{\times -|\zeta|}$ gates, we obtain that
$$6x_{j'}-3 = - \text{val}(j_1) - \text{val}(j_2) \pm 3 \varepsilon \cdot u_i(M).$$
Since $\text{val}(j') = T_{[-1,1]}(6x_{j'}-3)$, it follows that $\text{val}(j') = - T_{[-1,1]}(\text{val}(j_1) + \text{val}(j_2)) \pm 5\varepsilon$. Thus, this constraint is correctly enforced as long as $10\varepsilon \leq \widehat{\varepsilon}$.

\bigskip

We are now ready to complete the proof. Set $\varepsilon = \widehat{\varepsilon}/10$. Consider any $\varepsilon$-approximate consensus halving $(M_1,M_2)$ that uses at most $n$ cuts. We claim that letting $\xx[v] = \text{val}(j(v))$ for all $v \in V$ yields a solution to the \textsc{$\widehat{\varepsilon}$-simple-Gcircuit} instance. Indeed, by construction, all gates of type $G_+$ and $G_{\times -|\zeta|}$ are correctly enforced. For gates of type $G_1$, they will be correctly enforced if $\text{val}(j^*)=-1$, which we now prove. Note that in our construction, we have ensured that for every $v \in V_+ \cup V_\times \cup V_1$, item $j(v)$ must be fractional, and for every $v \in V_+$, item $j'(v)$ must also be fractional. Since these $2|V_+| + |V_\times| + |V_1| = n$ items are fractional, and we used at most $n$ cuts, this means that all other items are \emph{not} fractional. In particular, $j^*$ is not fractional, i.e., $x_{j^*} \in \{0,1\}$. Without loss of generality, assume that $x_{j^*} = 0$ (if $x_{j^*} = 1$, then swap the roles of $M_1$ and $M_2$). It follows that $\text{val}(j^*)=-1$.
\end{proof}

\section{Connections to Agreeable Sets}
\label{sec:agreeable}

We now present some implications of results from consensus halving on the setting of computing agreeable sets.
Let us first formally define the agreeable set problem, introduced by \citet{ManurangsiSu19}.\footnote{The notion of agreeability was introduced in an earlier conference version of the paper \citep{Suksompong16}. \citet{Gourves19} considered an extension of the problem that takes into account matroidal constraints.}
As in consensus halving, there is a set $N$ of $n$ agents and a set $M$ of $m$ items.
Agent $i$ has a monotonic utility function $u_i$ over \emph{non-fractional} sets of items, where we assume the normalization $u_i(\emptyset) = 0$; this corresponds to a set function.

\begin{definition}
\label{def:agreeable}
A subset of items $M'\subseteq M$ is said to be \emph{agreeable to agent~$i$} if $u_i(M')\geq u_i(M\backslash M')$.
\end{definition}

As one of their main results, \citet{ManurangsiSu19} showed that for any $n$ and $m$, there exists a set of at most $\min\left\{\lfloor\frac{m+n}{2}\rfloor, m\right\}$ items that is agreeable to all agents, and this bound is tight.
Their proof relies on a graph-theoretic statement often referred to as ``Kneser's conjecture'', which specifies the chromatic number for a particular class of graphs called Kneser graphs.
Here we present a short alternative proof that works by arranging the items on a line in arbitrary order, applying consensus halving, and rounding the resulting fractional partition.
As a bonus, our proof yields an agreeable set that is composed of at most $\lfloor n/2\rfloor + 1$ blocks on the line.

\begin{theorem}[\citet{ManurangsiSu19}]
\label{thm:agreeable-existence}
For $n$ agents with monotonic utilities, there exists a subset $M'\subseteq M$ such that
\[
|M'| \leq \min\left\{\left\lfloor\frac{m+n}{2}\right\rfloor, m\right\}
\]
and $M'$ is agreeable to all agents.
\end{theorem}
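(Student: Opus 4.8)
The plan is to derive Theorem~\ref{thm:agreeable-existence} directly from the existence of a consensus halving with at most $n$ cuts. Arrange the $m$ items on a line in an arbitrary order, and think of each item as a unit-length sub-interval, so that the whole resource becomes the interval $[0,m]$. Each agent's monotonic set function can be extended to a monotone valuation over sub-intervals of $[0,m]$ (for instance, by declaring a fractional piece of item $j$ to have the value the agent assigns to the fraction via any monotone interpolation; the precise extension will not matter, only that it is continuous and monotone and agrees with $u_i$ on whole-item sets). By the existence theorem of \citet{SimmonsSu03}, there is a consensus halving $(M_1, M_2)$ of $[0,m]$ using at most $n$ cuts on the line, so that $u_i(M_1) = u_i(M_2)$ for every agent $i$. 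Since whole items contribute $0$ or $1$ copies to each side, the at-most-$n$ cuts fall within at most $n$ items (a cut between two adjacent items is unnecessary and can be removed), so at most $n$ items are fractional; the other $m - n$ items lie entirely in $M_1$ or entirely in $M_2$.

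Now I would round the fractional items. Let $F$ be the set of fractional items, $|F| \le n$, and let $W_1 \subseteq M_1$ and $W_2 \subseteq M_2$ be the sets of whole items on each side. Without loss of generality assume $|W_1| \le |W_2|$. Form the candidate agreeable set $M' := W_1 \cup F$. For any agent $i$, monotonicity gives $u_i(M') \ge u_i(M_1)$ (since $M_1$, restricted to whole-item content, equals $W_1$ plus fractions of items in $F$, all of which are contained in $W_1 \cup F$), while $M \setminus M' = W_2$ which is a subset of $M_2$, so $u_i(M \setminus M') \le u_i(M_2) = u_i(M_1) \le u_i(M')$. Hence $M'$ is agreeable to every agent. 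For the size bound: $|M'| = |W_1| + |F| \le |W_1| + n$, and since $|W_1| \le |W_2|$ we have $|W_1| \le \lfloor (m - |F|)/2 \rfloor \le \lfloor (m-|F|)/2\rfloor$, giving $|M'| \le \lfloor (m-|F|)/2 \rfloor + |F| = \lfloor (m+|F|)/2 \rfloor \le \lfloor (m+n)/2 \rfloor$. The bound $|M'| \le m$ is trivial. For the block count remark, note the cuts partition the line into at most $n+1$ segments alternating between $M_1$ and $M_2$; choosing $M'$ to follow whichever of $M_1, M_2$ owns the fewer segments, then absorbing the fractional endpoints, one gets at most $\lceil (n+1)/2 \rceil = \lfloor n/2 \rfloor + 1$ blocks.

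The one subtlety I expect to need care is making precise the "extension to sub-intervals" so that the \citet{SimmonsSu03} existence theorem applies: their theorem requires continuous preferences over the interval, and I must verify that the monotone interpolation of a monotone set function yields a continuous (and monotone) additively-representable-enough valuation, and that a consensus halving for the extension, after rounding, still behaves monotonically with respect to the original $u_i$. The crucial point used in the rounding step is exactly monotonicity of $u_i$: adding the fractional items wholly to the $M_1$-side can only increase each agent's value, while the complement shrinks to a subset of $M_2$, which can only decrease value — so the equality $u_i(M_1) = u_i(M_2)$ survives rounding as the desired inequality. A second, minor point is the parity bookkeeping in the size bound, i.e.\ checking that $\lfloor (m - |F|)/2 \rfloor + |F| = \lfloor (m+|F|)/2 \rfloor$ for all integer values of $|F|$, which is a routine case check on the parity of $m$ and $|F|$.
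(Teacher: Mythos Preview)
Your proposal is correct and follows essentially the same route as the paper: arrange the items on a line, extend the monotone set functions to continuous monotone valuations, invoke the \citet{SimmonsSu03} existence result to get a consensus halving with at most $n$ cuts, and then round by assigning all cut items to the side with fewer whole items, using monotonicity for agreeability and the identity $\lfloor(m-|F|)/2\rfloor + |F| = \lfloor(m+|F|)/2\rfloor$ for the size bound. The paper likewise points to the Lov\'asz or multilinear extension for the continuous-monotone interpolation you flag as the main subtlety.
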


\begin{proof}
Let $s=\left\lfloor\frac{m+n}{2}\right\rfloor$.
If $s\geq m$, the entire set of items $M$ has size $m=\min\{s,m\}$ and is agreeable to all agents due to monotonicity, so we may assume that $s\leq m$.
Arrange the items on a line in arbitrary order, and extend the utility functions of the agents to fractional sets of items in a continuous and monotonic fashion.\footnote{For example, one can use the \emph{Lov\'{a}sz extension} or the \emph{multilinear extension} (see Section~\ref{app-subsec:continuous-extensions}). \label{fn:monotonic-extension}}
Consider a consensus halving with respect to the extended utilities that uses at most $n$ cuts on the line; some of the cuts may cut through items, whereas the remaining cuts are between adjacent items.
Let $r\leq n$ be the number of items that are cut by at least one cut.
Without loss of generality, assume that the first part $M'$ contains no more full items than the second part $M''$, so $M'$ contains at most $\left\lfloor\frac{m-r}{2}\right\rfloor$ full items.
By moving all cut items from $M''$ to $M'$ in their entirety, $M'$ contains at most $\left\lfloor\frac{m-r}{2}\right\rfloor + r = \left\lfloor\frac{m+r}{2}\right\rfloor \leq s$ items.
Since we start with a consensus halving and only move fractional items from $M''$ to $M'$, we have that $M'$ is agreeable to all agents.
Moreover, one can check that $M'$ is composed of at most $\left\lceil\frac{n+1}{2}\right\rceil = \left\lfloor\frac{n}{2}\right\rfloor + 1$ blocks on the line.
\end{proof}

In light of Theorem~\ref{thm:agreeable-existence}, an important question is how efficiently we can compute an agreeable set whose size matches the worst-case bound.
\citet{ManurangsiSu19} addressed this question by providing a polynomial-time algorithm for two agents with monotonic utilities and three agents with ``responsive'' utilities, a class that lies between additive and monotonic utilities.
They left the complexity for higher numbers of agents as an open question, and conjectured that the problem is hard even when the number of agents is a larger constant.
We show that this is in fact not the case: the problem can be solved efficiently for any number of agents with additive utilities, as well as for any \emph{constant} number of agents with monotonic utilities.
Note that since the input of the problem for monotonic utilities can involve an exponential number of values (even for constant $n$), and consequently may not admit a succinct representation, we assume a ``utility oracle model'' in which the algorithm is allowed to query the utility $u_i(M')$ for any $i\in N$ and $M'\subseteq M$.

\begin{theorem}
\label{thm:agreeable-algo}
There exists a polynomial-time algorithm that computes a set containing at most $\min\left\{\left\lfloor\frac{m+n}{2}\right\rfloor, m\right\}$ items that is agreeable to all agents, for each of the following two cases:
\begin{enumerate}[label=(\roman*)]
\item All agents have additive utilities.
\item All agents have monotonic utilities and the number of agents is constant (assuming access to a utility oracle).
\end{enumerate}
\end{theorem}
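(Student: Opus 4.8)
The plan is to reduce both parts to the template already used in the proof of Theorem~\ref{thm:agreeable-existence}: produce a consensus halving that cuts few items, then round it. Suppose $(M_1,M_2)$ is a partition with $u_i(M_1)=u_i(M_2)$ for all $i$ (with respect to a monotonic extension of the $u_i$ to fractional sets) that cuts only $r$ items; write $A=\{j:x_j=1\}$, $B=\{j:x_j=0\}$, $C=\{j:0<x_j<1\}$ for the items fully in $M_1$, fully in $M_2$, and cut, respectively. Swapping $M_1$ and $M_2$ if necessary, assume $|A|\le|B|$, so $|A|\le\lfloor(m-r)/2\rfloor$, and set $M'=A\cup C$. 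Then $|M'|\le\lfloor(m-r)/2\rfloor+r=\lfloor(m+r)/2\rfloor$; and since $M'$ coordinatewise dominates the fractional set $M_1$ while $M\setminus M'=B$ is dominated by $M_2$, monotonicity gives $u_i(M')\ge u_i(M_1)=u_i(M_2)\ge u_i(M\setminus M')$, so $M'$ is agreeable to everyone. Thus it only remains, in each case, to obtain such a halving efficiently with $r\le\min\{n,m\}$, which yields the size bound $\min\{\lfloor(m+n)/2\rfloor,m\}$.

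For part (i) the extension is the canonical linear one, and Theorem~\ref{thm:polytope} already gives a polynomial-time algorithm producing a consensus halving with at most $\min\{n,m\}$ cuts; I would simply feed its output into the rounding step above. Here the verification is even easier than in general: additivity gives $u_i(M')+u_i(M\setminus M')=u_i(M)$ and $u_i(M')\ge u_i(M_1)=\tfrac12 u_i(M)$, so $u_i(M\setminus M')\le\tfrac12 u_i(M)\le u_i(M')$. Everything runs in polynomial time.

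For part (ii) Theorem~\ref{thm:PPAD} rules out computing a low-cut consensus halving for monotonic utilities in general, and we only have a non-fractional oracle, so I would avoid consensus halving altogether and exploit the structural bonus in the proof of Theorem~\ref{thm:agreeable-existence}: after fixing an arbitrary ordering of the items on a line, that proof (applied to any continuous monotonic extension, such as the Lov\'asz extension, used purely in the existence argument) shows there is an agreeable set of size at most $\min\{\lfloor(m+n)/2\rfloor,m\}$ that is a union of at most $\lfloor n/2\rfloor+1$ contiguous blocks on the line. The number of such ``block-structured'' subsets is at most $\binom{m+1}{2}^{\lfloor n/2\rfloor+1}=m^{O(n)}$, which is polynomial for constant $n$. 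So the algorithm enumerates all unions of at most $\lfloor n/2\rfloor+1$ intervals of $[m]$, discards those whose size exceeds $\min\{\lfloor(m+n)/2\rfloor,m\}$, and for each remaining candidate $M'$ queries the oracle to test $u_i(M')\ge u_i(M\setminus M')$ for all $i$, outputting the first candidate that passes; the proof of Theorem~\ref{thm:agreeable-existence} guarantees one does.

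The hard part is the correctness of the restriction in part (ii): one must be certain that searching only the polynomially many block-structured sets cannot miss every solution, i.e.\ that the $\lfloor n/2\rfloor+1$-block bound from the proof of Theorem~\ref{thm:agreeable-existence} really holds, which I would double-check by tracking the at most $n+1$ line segments and at most $n$ cut items as the cut items are pushed to the smaller side. (An alternative route for part (ii) is to run a constant-$n$ consensus-halving-on-a-line procedure on the Lov\'asz extensions---which are computable from the set-function oracle with $O(m)$ queries---and then round; I prefer the enumeration because it is self-contained and stays entirely within the non-fractional oracle model.) For part (i) the remaining checks are only the elementary size and agreeability computations above, which are routine.
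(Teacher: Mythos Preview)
Your proposal is correct and follows essentially the same approach as the paper: for (i) apply Theorem~\ref{thm:polytope} and round as in Theorem~\ref{thm:agreeable-existence}, and for (ii) fix a line ordering and brute-force search over the polynomially many structured candidates guaranteed by the proof of Theorem~\ref{thm:agreeable-existence}. The only cosmetic difference is that in (ii) you parametrize the search space by the number of blocks ($\lfloor n/2\rfloor+1$) whereas the paper parametrizes by the number of cuts on the line ($\le n$), but these describe the same $m^{O(n)}$ family of subsets.
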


\begin{proof}
Similarly to Theorem~\ref{thm:agreeable-existence}, if $n\geq m$ we can simply include all items in our set, so we may focus on the case $n\leq m$.
For (i), we first use our polynomial-time algorithm from Theorem~\ref{thm:polytope} to find a consensus halving, and then compute an agreeable set of size at most $\left\lfloor\frac{m+n}{2}\right\rfloor$ by rounding the consensus halving as in the proof of Theorem~\ref{thm:agreeable-existence}.

Next, consider (ii). 
Recall that for any ordering of the items on a line, Theorem~\ref{thm:agreeable-existence} guarantees the existence of an agreeable set of size at most $\left\lfloor\frac{m+n}{2}\right\rfloor$ involving no more than $n$ cuts on the line.
Fix an ordering of the items; we will perform a brute-force search over all (non-fractional) partitions involving at most $n$ cuts with respect to the ordering.
For $t\in[n]$, there are $O(m^t)$ ways to place $t$ cuts, and for each way, we have two candidate sets to check: one including the leftmost item, and one not including it.
A candidate set is valid if and only if it has size at most $\left\lfloor\frac{m+n}{2}\right\rfloor$ and is agreeable to all agents.
Hence the brute-force search runs in time $\sum_{t=1}^n O(m^t) = O(n\cdot m^n) = O(m^n)$, which is polynomial since $n$ is constant.
\end{proof}

\section{Consensus $k$-Splitting}
\label{sec:k-splitting}

In this section, we address two important generalizations of consensus halving, both of which were mentioned by \citet{SimmonsSu03}.
In \emph{consensus splitting}, instead of dividing the items into two equal parts, we want to divide them into two parts so that all agents agree that the split satisfies some given ratio, say two-to-one.
In \emph{consensus $1/k$-division}, we want to divide the items into $k$ parts that all agents agree are equal.
We consider a problem that generalizes both of these problems at once.
\begin{definition}
\label{def:consensus-k-splitting}
Let $\alpha_1,\dots,\alpha_k > 0$ be real numbers such that $\alpha_1+\dots+\alpha_k = 1$.
A \emph{consensus $k$-splitting with ratios $\alpha_1,\dots,\alpha_k$} is a partition of $M$ into $k$ fractional sets of items $M_1,\dots,M_k$ such that
\begin{align*}
\frac{u_i(M_1)}{\alpha_1} = \frac{u_i(M_2)}{\alpha_2} = \dots = \frac{u_i(M_k)}{\alpha_k} & & \forall i \in N.
\end{align*}
When the ratios are clear from context, we will simply refer to such a partition as a \emph{consensus $k$-splitting}.
\end{definition}
\noindent As in Section~\ref{sec:additive}, we will assume that the utility functions are additive, in which case our desired condition is equivalent to $u_i(M_\ell) = \alpha_\ell \cdot u_i(M)$ for all $i\in N$ and $\ell\in[k]$.

While there is no reason to cut an item more than once in consensus halving, one may sometimes wish to cut the same item multiple times in consensus $k$-splitting in order to split the item across three or more parts.
Hence, even though the number of cuts made is always at least the number of items cut, the two quantities are not necessarily the same in consensus $k$-splitting.
If there are $n$ items and each agent only values a single distinct item, then it is clear that we already need to make $(k-1)n$ cuts for any ratios $\alpha_1,\dots,\alpha_k$, in particular $k-1$ cuts for each item.
Nevertheless, it could still be that for some ratios, it is always possible to achieve a consensus $k$-splitting by cutting fewer than $(k-1)n$ items.
We show that this is not the case: for any set of ratios, cutting $(k-1)n$ items is necessary in the worst case.

\begin{theorem}
\label{thm:k-splitting-worstcase}
For any ratios $\alpha_1,\dots,\alpha_k > 0$, there exists an instance with additive utilities in which any consensus $k$-splitting with these ratios cuts at least $(k-1)n$ items.
\end{theorem}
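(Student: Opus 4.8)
The plan is to exhibit an explicit instance where every item is indispensable to some agent and each such item must be cut $k-1$ times. The most natural candidate, echoing the lower-bound construction in Theorem~\ref{thm:NP-partition}, is to take $n$ agents and $n$ items, where agent $i$ values only item $i$ (say $u_i(i) = 1$ and $u_i(j) = 0$ for $j \ne i$). For agent $i$, the constraint $u_i(M_\ell) = \alpha_\ell \cdot u_i(M) = \alpha_\ell$ forces the fraction of item $i$ placed in part $M_\ell$ to equal exactly $\alpha_\ell$, for every $\ell \in [k]$. Since all $\alpha_\ell > 0$, this means a strictly positive fraction of item $i$ lies in each of the $k$ parts, so item $i$ needs $k-1$ cuts. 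Summing over all $n$ items gives $(k-1)n$ cuts, and since each of the $n$ items is cut, in particular at least $(k-1)n$ items-with-multiplicity — but wait, the theorem says "cuts at least $(k-1)n$ items," which I read as the number of cuts (consistent with the discussion preceding the theorem, where the number of cuts can exceed the number of distinct items cut). So the statement to prove is about the number of cuts, and the construction above delivers exactly $(k-1)n$ cuts as a hard lower bound.

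The key steps, in order: (1) define the instance with $n$ agents and $n$ items and the single-valued utilities above; (2) observe that for additive utilities the consensus $k$-splitting condition reduces to $u_i(M_\ell) = \alpha_\ell u_i(M)$ for all $i, \ell$, as already noted after Definition~\ref{def:consensus-k-splitting}; (3) deduce that the fraction $x_{i,\ell}$ of item $i$ in part $M_\ell$ must equal $\alpha_\ell$ exactly, since only agent $i$ values item $i$; (4) conclude that, because $\alpha_\ell > 0$ for all $\ell$, item $i$ is split into all $k$ parts, which requires at least $k-1$ cuts of item $i$ (a point divided among $k$ nonempty parts needs $k-1$ cut points); (5) sum over $i \in [n]$ to get $(k-1)n$ cuts in total.

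There is essentially no hard obstacle here — the argument is a direct generalization of the single-item lower bound mentioned for consensus halving — but the one point requiring a little care is step (4): one must be precise about what "a cut" means when an item is spread across $k \geq 3$ parts, namely that distributing a single homogeneous item among $k$ parts in positive fractions necessitates at least $k-1$ cut points within that item (one could formalize this by noting that the item, viewed as a subinterval, is partitioned into at least $k$ subintervals whose boundaries interior to the item are the cuts). Another minor point is handling general $n$ versus the overview's $n \le m$ regime, but since we are free to pick $m = n$, this causes no difficulty and the bound $(k-1)n$ holds as stated. I would also note in passing that this matches the algorithmic upper bound established by the generalization of Theorem~\ref{thm:polytope}, so the bound is tight.
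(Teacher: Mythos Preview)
Your proposal rests on a misreading of the statement. The theorem asks for an instance in which any consensus $k$-splitting \emph{cuts at least $(k-1)n$ items}, i.e., at least $(k-1)n$ distinct items must be fractionally split. It is \emph{not} asking for a lower bound on the total number of cuts. The paragraph immediately preceding the theorem in the paper makes this distinction explicit: it already notes that the instance with $n$ items where agent $i$ values only item $i$ forces $(k-1)n$ \emph{cuts}, and then says ``Nevertheless, it could still be that for some ratios, it is always possible to achieve a consensus $k$-splitting by cutting fewer than $(k-1)n$ \emph{items}. We show that this is not the case.'' In other words, your construction is precisely the ``easy'' observation the paper dispenses with before stating the theorem.

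Concretely, your instance has only $m=n$ items, so at most $n$ items can be cut, regardless of how many cuts are made within each item; for $k\ge 3$ this is strictly less than $(k-1)n$. The paper's proof is substantially more delicate: for each agent $i$ it introduces a block $B_i$ of $b$ equally valued items, and the heart of the argument is a number-theoretic choice of the integer $b$ (via the pigeonhole principle applied to the fractional parts $\{\alpha_\ell b\}$) guaranteeing $\{\alpha_1 b\}+\dots+\{\alpha_k b\}>k-2$. This inequality forces at least $k-1$ \emph{distinct} items in each $B_i$ to be cut, because the number of whole items that can fit into part $\ell$ is at most $\lfloor \alpha_\ell b\rfloor$. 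Your argument contains none of this machinery and would need to be redone from scratch.
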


\begin{proof}
Fix $\alpha_1,\dots,\alpha_k > 0$.
We construct an instance such that each agent $i$ has utility $1/b$ for each of the $b$ items in a set $B_i$, where $b$ is an integer that we will choose later, and utility $0$ for every other item. The sets $B_1,\dots,B_n$ are pairwise disjoint.
Note that $u_i(M) = u_i(B_i) = 1$ for every $i$.
It suffices to choose $b$ such that at least $k-1$ items in each set $B_i$ must be cut in any consensus $k$-splitting with ratios $\alpha_1,\dots,\alpha_k$.
By symmetry, we may focus on the first agent and the corresponding set $B_1$.

For any real number $x$, denote by $\lfloor x\rfloor$ its floor function, and let $\{x\} = x-\lfloor x\rfloor$.
We will choose $b$ such that
\begin{equation}
\label{eq:sum-of-fractionals}
\{\alpha_1b\}+\{\alpha_2b\}+\dots+\{\alpha_kb\} > k-2.
\end{equation}
To see why this is sufficient, observe that each \emph{uncut} item must belong to one of the $k$ parts in its entirety.
The number of uncut items in $B_1$ is therefore at most
\[
\left\lfloor\frac{\alpha_1}{1/b}\right\rfloor + \dots + \left\lfloor\frac{\alpha_k}{1/b}\right\rfloor
=
\lfloor\alpha_1b\rfloor + \dots + \lfloor\alpha_kb\rfloor,
\]
meaning that the number of cut items in $B_1$ is at least
\begin{align*}
b - (\lfloor\alpha_1b\rfloor + \dots + \lfloor\alpha_kb\rfloor)
&= (\alpha_1b + \dots + \alpha_kb) - (\lfloor\alpha_1b\rfloor + \dots + \lfloor\alpha_kb\rfloor) \\
&= (\alpha_1b - \lfloor\alpha_1b\rfloor) + \dots + (\alpha_kb - \lfloor\alpha_kb\rfloor) \\
&= \{\alpha_1b\} + \dots + \{\alpha_kb\} \\
&> k-2,
\end{align*}
where the first equality follows from $\alpha_1 + \dots + \alpha_k = 1$.
Since $b,\lfloor\alpha_1b\rfloor,\dots,\lfloor\alpha_kb\rfloor$ are all integers, this implies that at least $k-1$ items in $B_1$ must be cut.

It remains to show the existence of $b$ for which \eqref{eq:sum-of-fractionals} is satisfied.
Let $s$ be an integer such that 
\[
s > \max\left\{k,\frac{1}{\alpha_1},\dots,\frac{1}{\alpha_k},\frac{1}{1-\alpha_1},\dots,\frac{1}{1-\alpha_k}\right\}.
\]
Divide the interval $[0,1]$ into subintervals of length at most $1/s$ each.
By the pigeonhole principle, there exist positive integers $p,q$ such that $q\geq p+2$, and $\{\alpha_ip\}$ and $\{\alpha_iq\}$ fall in the same subinterval for every $i\in[k]$.
Letting $c = q-p$, we have that for each $i\in[k]$, either $\{\alpha_ic\} < 1/s$ or $\{\alpha_ic\} > 1-1/s$.

Take $b = c-1 \geq 1$.
From our choice of $s$, we have $1/s < \alpha_i < 1- 1/s$ for all $i\in [k]$.
Thus, for each $i$, if $\{\alpha_ic\} < 1/s$ then $\{\alpha_ic\} < \alpha_i$, while if $\{\alpha_ic\} > 1-1/s$ then $\{\alpha_ic\} > \alpha_i$.
In either case, we have $\{\alpha_ib\} = \{\alpha_ic - \alpha_i\} > 1 - 1/s - \alpha_i$, so
\[
\{\alpha_1b\}+\dots+\{\alpha_kb\}
> k - k/s - (\alpha_1 + \dots + \alpha_k)
> k-2,
\]
where we use the assumption that $s>k$.
Hence \eqref{eq:sum-of-fractionals} is satisfied, and the proof is complete.
\end{proof}

Next, we show that computing a consensus $k$-splitting with at most $(k-1)n$ cuts can be done efficiently using a generalization of our algorithm for consensus halving (Theorem~\ref{thm:polytope}).
Note that such a splitting also cuts at most $(k-1)n$ items.

\begin{theorem}
\label{thm:polytope-k-splitting}
For $n$ agents with additive utilities and ratios $\alpha_1,\dots,\alpha_k$, there is a polynomial-time algorithm that computes a consensus $k$-splitting with these ratios using at most $(k-1)\cdot\min\{n,m\}$ cuts.
\end{theorem}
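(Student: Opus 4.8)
The plan is to generalize the algorithm and proof of Theorem~\ref{thm:polytope} essentially verbatim. First I would set up variables $x_{j\ell} \in [0,1]$ for the fraction of item $j$ placed in part $M_\ell$, subject to the $m$ partition constraints $\sum_{\ell=1}^k x_{j\ell} = 1$ (one per item) and the $(k-1)n$ consensus constraints $\sum_{j\in M} x_{j\ell}\cdot u_i(j) = \alpha_\ell\cdot u_i(M)$ for $i\in N$ and $\ell\in[k-1]$ (the $\ell=k$ constraint is then automatically implied by summing over $\ell$ and using the partition constraints). As in Theorem~\ref{thm:polytope}, if $n\geq m$ one splits every item according to the ratios directly, using at most $(k-1)m$ cuts total, so assume $n\leq m$ from now on.

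The core of the algorithm is to start from the trivial splitting $x_{j\ell} = \alpha_\ell$ for all $j,\ell$, which satisfies all equality constraints, and then repeatedly move along a line in the solution space of the current linear system until some coordinate $x_{j\ell}$ hits $0$ or $1$; that coordinate is then frozen by adding the equation $x_{j\ell} = $ (its current value) to the system, analogously to the set $T$ in Theorem~\ref{thm:polytope}. Each iteration is carried out by Gaussian elimination to find a nontrivial solution of the current augmented linear system, then taking the appropriate convex combination of the current point and the new solution, choosing the step size $\gamma$ minimally over all coordinates that change so that all coordinates remain in $[0,1]$ and at least one becomes integral. The loop terminates when the system has a unique solution. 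Throughout, the partition constraints and the consensus constraints are preserved because both endpoints of each line segment satisfy them and the constraints are linear (affine).

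The counting argument at the end is where one must be slightly careful, and I expect it to be the only genuine obstacle. When the loop terminates, the augmented system (the $m$ partition equations, the $(k-1)n$ consensus equations, plus the frozen equations in $T$) has a unique solution in $km$ variables, so its rank is $km$; hence $|T|\geq km - m - (k-1)n = (k-1)(m-n)$. But $|T|$ only counts frozen \emph{coordinates} $x_{j\ell}$, whereas we want to bound the number of \emph{cuts}, i.e. the total number of pairs (item $j$, part $\ell$) with $0<x_{j\ell}<1$ summed appropriately — more precisely, item $j$ contributes (number of parts $\ell$ with $x_{j\ell}\in(0,1)$) minus $1$ to the cut count if it is cut at all, and $0$ otherwise. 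The clean way to see the bound is: for each item $j$, among the $k$ coordinates $x_{j1},\dots,x_{jk}$, if $c_j$ of them lie strictly between $0$ and $1$, then item $j$ accounts for $\max\{c_j-1,0\}$ cuts (since $c_j=1$ is impossible given the partition constraint forces the sum to $1$, actually $c_j\neq 1$, so it accounts for $c_j-1$ cuts when $c_j\geq 2$ and $0$ when $c_j=0$); summing, the number of cuts is $\sum_j (c_j - [c_j\geq 1])\leq \sum_j (c_j - \tfrac{c_j}{k}) $... cleaner still: the total number of non-frozen coordinates is $km - |T| \leq km - (k-1)(m-n) = m + (k-1)n$, and subtracting one ``free'' coordinate per item (every item has at least one coordinate that is not frozen, since the partition constraint alone never pins down all $k$ coordinates to integers unless $k-1$ of them are already frozen — and if $k-1$ are frozen the last is determined hence effectively frozen too, contributing $0$ cuts) gives at most $m + (k-1)n - m = (k-1)n$ cuts. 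I would present this last step carefully: partition the $m$ items into those with all coordinates frozen (contributing $0$ cuts) and the rest, note each of the latter has a non-frozen coordinate, and conclude the cut count is (number of non-frozen coordinates) $-$ (number of items with a non-frozen coordinate) $\leq km - |T| - 0$, then bound using $|T|\geq (k-1)(m-n)$ together with the fact that items with all coordinates frozen don't reduce the estimate — which already yields $(k-1)n$. Finally, since each cut corresponds to a distinct pair, at most $(k-1)n$ items are cut as well, giving the remark preceding the theorem.
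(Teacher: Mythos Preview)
Your approach is correct but takes a genuinely different route from the paper. The paper does \emph{not} generalize the polytope-vertex algorithm to $km$ variables; instead it handles only $k=2$ directly (by initializing $x_j=\alpha_1$ in the algorithm of Theorem~\ref{thm:polytope}) and then obtains general $k$ by peeling off one part at a time: apply the $2$-splitting algorithm to split $M$ into $(M_1,M_{\text{rest}})$ with ratios $(\alpha_1,1-\alpha_1)$, then split $M_{\text{rest}}$ into $(M_2,M_{\text{rest}}')$, and so on, for $k-1$ rounds. Each round adds at most $\min\{n,m\}$ cuts, so the total is $(k-1)\min\{n,m\}$ with no new counting argument needed.

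Your single-shot version works too, and the counting is sound once stated cleanly: the algorithm never freezes all $k$ coordinates of any item (once $k-1$ are in $T$, the partition constraint forces $y_{j\ell_k}=x_{j\ell_k}$ in every solution of $S\cup T$, so that coordinate is never selected as $j^*$). Hence every item has $g_j\geq 1$ non-frozen coordinates, and since the number of cuts at item $j$ is $\max(c_j-1,0)\leq g_j-1$, the total is at most $\sum_j(g_j-1)=(km-|T|)-m\leq (k-1)n$ by the rank bound $|T|\geq km-m-(k-1)n$. I would replace the exploratory paragraph in your proposal with exactly this three-line argument; the detours through ``$\sum_j(c_j-\tfrac{c_j}{k})$'' and ``items with all coordinates frozen'' only obscure it. The trade-off: the paper's recursive proof is shorter and reuses Theorem~\ref{thm:polytope} as a black box, while yours runs a single vertex-finding loop and shows more directly that a vertex of the full $k$-splitting polytope has at most $(k-1)n$ fractional entries beyond the $m$ forced by the simplex constraints.
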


\begin{proof}
Let us start with the case $k = 2$, which can then be used as a subroutine for the case $k > 2$. Our algorithm for consensus 2-splitting generalizes the consensus halving algorithm in Theorem~\ref{thm:polytope}, so we only highlight the differences. To find a consensus 2-splitting with ratios $\alpha_1, \alpha_2$, the only change to the algorithm in Theorem~\ref{thm:polytope} is that we initialize $x_1 = \cdots = x_m = \alpha_1$ and let $S$ be the set of $n$ equations $\sum_{j \in M} (y_j - \alpha_1) \cdot u_i(j)$ for $i\in N$. By analogous arguments as in Theorem~\ref{thm:polytope}, this modified algorithm produces a consensus 2-splitting with ratios $\alpha_1, \alpha_2$ in polynomial time and uses at most $\min\{n, m\}$ cuts.

We now move on to the case $k > 2$. In this case, we simply apply the above consensus 2-splitting algorithm successively, each time producing one additional part at the expense of at most $\min\{n, m\}$ cuts. This is stated more precisely below.
\begin{enumerate}
\item Let $M_{\text{remaining}} = M$.
\item For $\ell = 1, \dots, k - 1$:
\begin{enumerate}
\item $(M_\ell, M_{\text{remaining}}) \leftarrow$ consensus 2-splitting of $M_{\text{remaining}}$ with ratios $\frac{\alpha_\ell}{\alpha_\ell + \dots + \alpha_n}, \frac{\alpha_{\ell + 1} + \dots + \alpha_n}{\alpha_\ell + \dots + \alpha_n}$.
\end{enumerate}
\item Output $(M_1, \dots, M_{k - 1}, M_{\text{remaining}})$
\end{enumerate}
It is clear that the output is a consensus $k$-splitting with ratios $\alpha_1, \dots, \alpha_k$, and that the algorithm runs in polynomial time. Finally, observe that each time we apply the consensus 2-splitting algorithm, if there are $m'$ items left, we additionally use at most $\min\{n, m'\} \leq \min\{n,m\}$ cuts. As a result, the total number of cuts is at most $(k - 1) \cdot \min\{n, m\}$, as desired.
\end{proof}

As in Theorem~\ref{thm:polytope}, our algorithm does not require the nonnegativity assumption on the utilities and therefore works for combinations of goods and chores.

When the items lie on a line, there is always a consensus halving that makes at most $n$ cuts on the line and therefore cuts at most $n$ items---this matches the upper bound on the number of items cut in the absence of a linear order.
Theorem~\ref{thm:polytope-k-splitting} shows that the bound $n$ continues to hold for consensus splitting into two parts with any ratios.
As we show next, however, this bound is no longer achievable for some ratios with ordered items, thereby demonstrating another difference that the lack of linear order makes.\footnote{See the definition of the consensus halving problem on a line before Theorem~\ref{thm:PPA-line}.}

\begin{theorem}
\label{thm:k-splitting-more-cuts}
Let $n\geq 2$, $k=2$ and $(\alpha_1,\alpha_2) = (\frac{1}{n},\frac{n-1}{n})$.
There exists an instance such that the $n$ agents have additive utilities, the items lie on a line, and any consensus $k$-splitting with ratios $\alpha_1$ and $\alpha_2$ makes at least $2n-4$ cuts on the line.
\end{theorem}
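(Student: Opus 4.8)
The plan is to build an instance with $n$ agents where the valuations force many cuts because of the interaction between the ratio $\tfrac1n : \tfrac{n-1}{n}$ and the ordering on the line. The natural strategy is to mimic the lower bound construction of Theorem~\ref{thm:k-splitting-worstcase} but exploit the linear order: there, each agent could independently have its own cluster of items cut, but now the order is fixed, so I want to interleave the agents' valued regions so that a single cut can serve at most one agent's "small part" $M_1$ at a time. Concretely, I would place $n$ blocks of identical items on the line, one block $B_i$ per agent $i$, arranged consecutively $B_1, B_2, \dots, B_n$, with agent $i$ valuing the items of $B_i$ uniformly (value $1/b$ each, $b$ items) and valuing everything else at $0$. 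Since $M_1$ must capture exactly an $\alpha_1 = 1/n$ fraction of each agent's value, the set $M_1$ must intersect every block $B_i$ in total fraction $1/n$, and $M_2$ must get the remaining $(n-1)/n$ fraction of each block.

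The key combinatorial point is counting cuts on the line. Within each block $B_i$, the portion going to $M_1$ has total measure $1/n$ of the block; if this portion is "spread out" it costs cuts, but unlike the unordered case we cannot choose $b$ so that $\{\alpha_1 b\}+\{\alpha_2 b\}$ is large for both parts simultaneously in a way that is independent across blocks — the line forces the $M_1$-portions of consecutive blocks to be separated by $M_2$-portions. So I would argue: in each block the $M_1$-share forms one or more intervals; each maximal $M_1$-interval strictly inside the line contributes two cuts at its endpoints, and consolidating the $M_1$-share of a block into a single interval still generically forces it to be bounded on both sides by $M_2$ (since adjacent blocks also need their own $M_1$-share, which cannot be merged across a block boundary without violating one agent's ratio). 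Counting: $n$ blocks, each needing an $M_1$-interval bounded on both sides inside the line gives roughly $2n$ cuts, and I then shave off the boundary effects at the two ends of the line (the first block's $M_1$-interval can be flush-left, the last's flush-right, saving up to $4$ cuts) to land at the stated bound $2n - 4$. The choice of $b$ should just be any integer with $b > n$ and $b$ not a multiple of $n$ (so that $\alpha_1 b = b/n \notin \mathbb Z$ and hence the $M_1$-share of a block cannot consist entirely of whole items — it must contain a fractional item, i.e.\ at least one cut), which is the ingredient that rules out the degenerate splittings.

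The main obstacle I anticipate is the lower-bound bookkeeping: I need to rule out clever solutions that "borrow" across block boundaries, e.g.\ putting the tail of $B_i$'s $M_1$-share adjacent to the head of $B_{i+1}$'s $M_1$-share so that the two share a common boundary and save a cut. This is where the specific ratio $1/n$ versus $(n-1)/n$ matters: because each agent needs only a tiny $1/n$ slice in $M_1$ and a large $(n-1)/n$ slice in $M_2$, and the blocks are disjoint in value, I should be able to show that any arrangement achieving the exact ratios must, for each $i$ with $2 \le i \le n-1$, place some $M_2$-mass of block $B_i$ strictly between $B_i$'s $M_1$-mass and each neighbouring block — otherwise one of the neighbours cannot simultaneously meet its own $1/n$ requirement. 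Formalizing this "separation" claim carefully — likely via a parity or interval-counting argument on each block, tracking how many times the label switches between $M_1$ and $M_2$ along the line — is the technical heart; once it is in place, summing the switch-counts over all blocks and subtracting the two free endpoints yields $2n-4$.
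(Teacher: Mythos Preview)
Your construction with consecutive blocks $B_1,\dots,B_n$ does not force $2n-4$ cuts; in fact it admits a consensus $2$-splitting with only $n$ cuts. Concretely, alternate: put the first $1/n$ of $B_1$ in $M_1$, then the last $(n-1)/n$ of $B_1$ together with the first $(n-1)/n$ of $B_2$ in $M_2$, then the last $1/n$ of $B_2$ together with the first $1/n$ of $B_3$ in $M_1$, and so on. Each block gets exactly one internal cut, and the $M_1$- and $M_2$-pieces merge freely across block boundaries. Your anticipated ``separation'' claim---that for interior blocks some $M_2$-mass must lie between the block's $M_1$-share and each neighbour---is precisely what fails here: in block $B_2$ the $M_1$-share sits flush against the boundary with $B_3$, and nothing in the valuations prevents this, since agent~$2$ is indifferent to where within $B_2$ her $1/n$ share lies, and agents $1$ and $3$ do not see $B_2$ at all. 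Choosing $b$ with $b/n\notin\mathbb{Z}$ does not help; it only forces the cut to land inside an item rather than between items, still one cut per block.

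The paper's construction avoids this by doing two things you did not. First, it \emph{interleaves} the items valued by different agents rather than placing them in contiguous blocks: agent $i\in[n-1]$ values primary items $i,\,i+(n-1),\,\dots,\,i+n(n-1)$, spread out along the line. Second, and crucially, it spends one agent (agent $n$) on ``secondary'' filler items between consecutive primary items; this agent's $1/n$ constraint caps the number of whole secondary items $M_1$ can contain, which in turn bounds from above the number of primary items any single $M_1$-interval can touch. The interleaving forces $M_1$ to touch at least $2(n-1)$ primary items (two per agent, since $\tfrac{1}{n+1}<\tfrac{1}{n}$), while the secondary-item cap forces $M_1$ to consist of at least $n-1$ separate intervals, yielding the $2n-4$ bound after discounting the two endpoints. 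Without something playing the role of agent $n$, there is nothing to stop long $M_1$-intervals from sweeping up many agents' shares at once.
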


\begin{proof}
We discretize a slight modification of an instance used by \citet{StromquistWo85} to show a lower bound on the number of cuts when the resource is represented by a one-dimensional circle.
Suppose that there are $n^2-1$ ``primary items'', which we label as $1,2,\dots,n^2-1$ according to their linear order.
Moreover, there are $n^2-2$ ``secondary items'', one between every adjacent pair of primary items.
The utilities of the agents are as follows:
\begin{itemize}
\item For $i\in[n-1]$, agent $i$ has utility $\frac{1}{n+1}$ for each of the $n+1$ primary items $i, i+(n-1), \dots, i+n(n-1)$, and utility $0$ for all secondary items.
\item Agent $n$ has value $\frac{1}{n^2-2}$ for each secondary item, and value $0$ for all primary items.
\end{itemize}

Note that $u_i(M)=1$ for all $i$.
Let $M'$ be a fractional set of items for which all agents have utility $1/n$.
Since each agent $i\in[n-1]$ has utility $\frac{1}{n+1}$ for a primary item, $M'$ must contain a positive fraction of at least two primary items that the agent values.
These items are disjoint for different agents, so $M'$ necessarily contains a positive fraction of at least $2n-2$ primary items.
On the other hand, the utility function of agent $n$ implies that $M'$ can contain at most $\left\lfloor\frac{1/n}{1/(n^2-2)}\right\rfloor = n-1$ entire secondary items.

Suppose that $M'$ is composed of $r$ non-adjacent intervals $I_1,\dots,I_r$.
Notice that for any interval $I$ on the line, if the interval contains a positive fraction of $t_1(I)$ primary items, along with $t_2(I)$ \emph{entire} secondary items, then $t_1(I)\leq t_2(I) + 1$.
Hence, we have
\[
2n-2 \leq
\sum_{i=1}^r t_1(I_i) \leq \sum_{i=1}^r t_2(I_i) + r \leq
n-1 + r,
\]
implying that $r\geq n-1$.
This means that the consensus $2$-splitting with $M'$ as one part involves at least $2(n-1) = 2n-2$ cuts, possibly including endpoints of the line.
At most two of these cuts can correspond to endpoints, implying that the number of cuts made is at least $2n-4$, as desired.
\end{proof}

For consensus halving, Theorem~\ref{thm:asymptotic-n-cuts} shows that in a random instance, any solution almost surely uses at least the worst-case number of cuts $\min\{n,m\}$.
One might consequently expect that an analogous statement holds for consensus $k$-splitting, with $(k-1)\cdot\min\{n,m\}$ cuts almost always being required. 
However, we show that this is not true: even in the simple case where $n=1$ and the agent's utilities are drawn from the uniform distribution over $[0,1]$, it is likely that we only need to make one cut (instead of $k-1$) for large $m$.

\begin{theorem} \label{thm:asymptotic-upper-bound}
Let $n=1$, and suppose that the agent's utility for each item is drawn independently from the uniform distribution on $[0,1]$.
For any ratios $\alpha_1,\dots,\alpha_k > 0$, with probability approaching $1$ as $m\rightarrow\infty$, there exists a consensus $k$-splitting with these ratios using at most one cut.
Moreover, there is a polynomial-time algorithm that computes such a solution.
\end{theorem}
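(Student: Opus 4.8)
The plan is to reserve a single item to absorb all the fractionality and to pack everything else greedily. Write $V=\sum_{j\in M}u(j)$; since the utilities are additive, a consensus $k$-splitting is simply a partition $(M_1,\dots,M_k)$ with $u(M_\ell)=\alpha_\ell V$ for all $\ell$. Let $j^*\in\argmax_{j\in M}u(j)$ and $\delta=u(j^*)$, so that every other item has value at most $\delta$. The reduction step is this: if we can partition $M\setminus\{j^*\}$ into $B_1,\dots,B_k$ with $u(B_\ell)\le\alpha_\ell V$ for every $\ell$, then putting a $\beta_\ell:=(\alpha_\ell V-u(B_\ell))/\delta$ fraction of $j^*$ into $B_\ell$ yields a consensus $k$-splitting, because $\beta_\ell\ge 0$ and $\sum_\ell\beta_\ell=(V-u(M\setminus\{j^*\}))/\delta=1$. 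In this solution $j^*$ is the only item that is cut (hence exactly one cut when $k=2$).

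The remaining task is to produce such a partition of $M\setminus\{j^*\}$ in polynomial time, with probability tending to $1$. I would sort these items as $a_1\ge\cdots\ge a_{m-1}$ and process them in this order, each time placing $a_t$ into a part whose current residual room $r_\ell:=\alpha_\ell V-u(B_\ell)$ is largest, declaring failure if even this largest residual room is below $a_t$. The bookkeeping identity is that when we are about to place $a_t$, $\sum_\ell r_\ell=V-\sum_{s<t}a_s=\delta+\sum_{s\ge t}a_s\ge\delta+a_t$. Consequently, if $a_t\le\delta/(k-1)$ then $\sum_\ell r_\ell\ge ka_t$, so the largest $r_\ell$ is at least $a_t$ and the placement succeeds; in particular no failure ever happens when $k=2$, since a stuck item would need value exceeding $\delta$. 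If a failure does occur at $a_t$, then $a_t>\delta/(k-1)$ and $\sum_\ell r_\ell<ka_t\le k\delta$, which rearranges to $\sum_{s\ge t}a_s<(k-1)\delta\le k-1$; moreover $a_t>\delta/(k-1)$ forces every non-special item of value at most $\delta/(k-1)$ into $\{a_s:s\ge t\}$, so $\sum_{s\ge t}a_s$ is at least the total value of all such small items.

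It remains to rule out failure with high probability, which is where the i.i.d.\ uniform structure enters. Since $\Pr[\delta<1/2]=2^{-m}\to 0$, I may assume $\delta\ge 1/2$, so that $\{a_s:s\ge t\}$ contains every non-special item of value at most $\tfrac1{2(k-1)}$ (and $j^*$ is not such an item). Let $Z=\sum_{j\in M}u(j)\cdot\mathbf{1}[u(j)\le\tfrac1{2(k-1)}]$, a sum of $m$ independent variables bounded by $\tfrac1{2(k-1)}$ with mean $\tfrac{m}{8(k-1)^2}$; Hoeffding's inequality gives $Z\ge\tfrac{m}{16(k-1)^2}$ with probability $1-e^{-\Omega(m)}$. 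On the intersection of these two events a failure would yield $k-1>\sum_{s\ge t}a_s\ge Z\ge\tfrac{m}{16(k-1)^2}$, impossible once $m>16(k-1)^3$. Hence the greedy procedure succeeds with probability $1-o(1)$, runs in time $O(mk+m\log m)$, and, together with the reduction step, outputs a consensus $k$-splitting cutting at most one item. (If the rare failure event occurs, one can fall back on the $(k-1)$-cut algorithm of Theorem~\ref{thm:polytope-k-splitting}, so the algorithm is always polynomial.)

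I expect the feasibility of the packing to be the crux: the total slack across the $k$ parts is only $\delta\le 1$, so one genuinely needs both the structural observation that a largest-first greedy can only get stuck on an item $a_t$ that is a constant fraction of $\delta$ while the total value of all items not exceeding $a_t$ is only $O(1)$, and a concentration estimate saying that a positive-constant fraction of the i.i.d.\ data lies below any fixed level and therefore contributes $\Theta(m)$ in total, yielding the contradiction. The argument uses the distribution only through these facts and would carry over to any i.i.d.\ non-atomic distribution supported on a bounded interval with positive mass near $0$.
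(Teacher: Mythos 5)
Your proof is correct and takes essentially the same approach as the paper: reserve the most-valuable item $j^*$ to absorb the fractionality, greedily pack $M\setminus\{j^*\}$ in decreasing order of value into the $k$ parts, and show via a concentration bound on the total value of ``small'' items that the deterministic sufficient condition for the greedy to succeed holds with high probability. The details of the greedy's failure analysis and the concentration step differ slightly (your condition uses threshold $\delta/(k-1)$ and bound $(k-1)\delta$ rather than the paper's $\delta/k$ and $k\delta$, and you apply Hoeffding to the sum of small items rather than a Chernoff bound on their count), but these are cosmetic variants of the same argument.
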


In what follows, we denote the agent's utility function by $u$, and say that an event happens ``with high probability'' if the probability that it happens approaches $1$ as $m\rightarrow\infty$.
The proof of Theorem~\ref{thm:asymptotic-upper-bound} proceeds by identifying a simple (deterministic) condition that guarantees a solution cutting only a single item; this is done in Lemma~\ref{lem:deterministic-condition-upper-bound}. Then, we show that this condition is satisfied with high probability.

\begin{lemma} \label{lem:deterministic-condition-upper-bound}
Suppose that there is a single agent. 
Let $j^* := \argmax_j u(j)$ denote a most-preferred item, and let $M_{\text{low-utility}} := \{j \in M \mid u(j) \leq \frac{1}{k} \cdot u(j^*)\}$ denote the set of items whose utility is less than $1/k$ times the utility of $j^*$. 
For any ratios $\alpha_1,\dots,\alpha_k > 0$, if $\sum_{j \in M_{\text{low-utility}}} u(j) \geq k \cdot u(j^*)$, then there is a consensus $k$-splitting with these ratios that cuts only $j^*$.
Moreover, there is a polynomial-time algorithm that computes such a solution.
\end{lemma}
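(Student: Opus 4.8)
The plan is to reduce the statement to a purely combinatorial claim about distributing whole items, and then patch up the parts exactly using a single fractional item. Write $V = u(M)$ and $W = u(j^*)$, and for $\ell \in [k]$ let $t_\ell := \alpha_\ell \cdot V$ denote the target value of part $M_\ell$; we may assume $W > 0$, since otherwise all utilities are $0$ and any partition is a consensus $k$-splitting with no cut. The crucial observation is the following: suppose we can partition $M \setminus \{j^*\}$ into $k$ groups whose values $v_1, \dots, v_k$ satisfy $v_\ell \le t_\ell$ for every $\ell$. Since $\sum_\ell v_\ell = V - W$ and $\sum_\ell t_\ell = V$, setting $x_\ell := (t_\ell - v_\ell)/W$ yields $x_\ell \ge 0$ and $\sum_\ell x_\ell = 1$. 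Hence we may legitimately give a fraction $x_\ell$ of $j^*$ to part $M_\ell$, which makes $u(M_\ell) = v_\ell + x_\ell W = t_\ell = \alpha_\ell V$, so $u(M_\ell)/\alpha_\ell = V$ for all $\ell$; this is a consensus $k$-splitting, and the only item that is ever cut is $j^*$.

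So it remains to find such a partition of $M \setminus \{j^*\}$. I would do this greedily: process the items of $M \setminus \{j^*\}$ in nonincreasing order of utility, and always place the current item into the part with the largest current \emph{deficit} $d_\ell := t_\ell - v_\ell$ (ties broken arbitrarily). It suffices to check that at every step the largest deficit is at least the value of the item being placed --- then the chosen part's deficit stays nonnegative while the others are unchanged, so $v_\ell \le t_\ell$ is preserved throughout, and at the end $\sum_\ell d_\ell = V - (V - W) = W$, as needed. Since $j^*$ has maximum utility, every item of $M \setminus \{j^*\}$ has value at most $W$, and every item of $M_{\text{low-utility}}$ has value at most $W/k$ by definition. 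Accordingly, split the analysis in two. While the algorithm is placing an item of value more than $W/k$ (a ``large'' item), all previously placed items are large as well, so their total value is at most the total value of all large items, which equals $(V - W) - \sum_{j \in M_{\text{low-utility}}} u(j) \le (V - W) - kW$ by hypothesis; hence the deficits currently sum to at least $(k+1)W$, and the largest is at least $(k+1)W/k > W$, which exceeds the item's value. Once all large items are placed, the deficits sum to exactly $W + \sum_{j \in M_{\text{low-utility}}} u(j)$, and whenever we are about to place a low-utility item, its value is still accounted for among the yet-unplaced items, so the deficits still sum to at least $W$ and the largest is at least $W/k$, again at least the item's value.

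Everything here is implementable in polynomial time (sort the items, run the greedy loop maintaining the $k$ running values, then output the fractions $x_\ell$), which gives the ``moreover'' part. The step I expect to be the real obstacle is the greedy analysis: a naive left-to-right filling of the parts can overshoot the last part's target by as much as $(k-2)W$, so one genuinely needs to process large items before low-utility ones, and the hypothesis $\sum_{j \in M_{\text{low-utility}}} u(j) \ge k \cdot u(j^*)$ is used exactly to guarantee enough headroom in the deficits while the large items are being placed, after which the low-utility items supply the fine adjustments. The remaining work is the careful ``placed so far versus still to place'' bookkeeping in each of the two stages, together with checking the degenerate cases ($u(j^*) = 0$, or $M = \{j^*\}$, where the hypothesis itself forces the instance to be trivial).
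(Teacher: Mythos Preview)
Your proof is correct and follows essentially the same approach as the paper: both reduce to showing that a greedy algorithm (processing items of $M\setminus\{j^*\}$ in nonincreasing utility order) can assign every item to some part without exceeding that part's target, and both split the analysis into the same two cases according to whether the current item has value above or below $W/k$. The only cosmetic differences are that you always pick the part with maximum deficit (the paper picks any part that fits) and you argue directly that the maximum deficit is large enough, whereas the paper phrases the same inequalities as a proof by contradiction.
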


\begin{proof}
For each $\ell \in [k]$, let $w_\ell := \alpha_\ell \cdot (\sum_{j \in M} u(j))$ be the ``target utility'' for part $\ell$ of the partition.
Consider the following greedy algorithm. 
\begin{itemize}
\item Let $P_1 = \cdots = P_k = \emptyset$.
\item Let $M^0 = M \setminus \{j^*\}$ and $j^{\text{max}} = \argmax_{j \in M^0} u(j)$.
\item While there exists $\ell \in [k]$ such that $u(P_{\ell} \cup \{j^{\text{max}}\}) \leq w_\ell$:
\begin{itemize}
\item Add $j^{\text{max}}$ to $P_{\ell}$.
\item Remove $j^{\text{max}}$ from $M^0$.
If $M^0 = \emptyset$, terminate.
Else, update $j^{\text{max}} = \argmax_{j \in M^0} u(j)$.
\end{itemize}
\end{itemize}
The algorithm clearly runs in polynomial time.
We claim that it terminates with $M^0 = \emptyset$ provided that $\sum_{j \in M_{\text{low-utility}}} u(j) \geq k \cdot u(j^*)$. 
This implies the statement of the lemma, because it would then suffice to split only item $j^*$.

Suppose for the sake of contradiction that $M^0 \ne \emptyset$ at the end of the execution. 
Consider the following two cases, based on whether $j^{\text{max}}$ at termination belongs to $M_{\text{low-utility}}$.
\begin{itemize}
\item \emph{Case 1:} $j^{\text{max}} \notin M_{\text{low-utility}}$. 
Since the algorithm terminates, it must be that $u(P_\ell) > w_\ell - u(j^{\text{max}}) \geq w_\ell - u(j^*)$ for each $\ell$. 
Summing this over $\ell \in [k]$, we get
\begin{align*}
u\left(P_1 \cup \dots \cup P_k\right) > w_1 + \dots + w_k - k \cdot u(j^*) = u(M) - k \cdot u(j^*).
\end{align*}
On the other hand, since $j^{\text{max}} \notin M_{\text{low-utility}}$, it must be that $M_{\text{low-utility}}$ is disjoint from $P_1 \cup \dots \cup P_k$. As a result, we have
\begin{align*}
u\left(P_1 \cup \dots \cup P_k\right) \leq u(M) - \sum_{j \in M_{\text{low-utility}}} u(j)
\leq u(M) - k \cdot u(j^*),
\end{align*}
where the second inequality is from the assumption of the lemma.
The above two inequalities imply the desired contradiction.
\item \emph{Case 2:} $j^{\text{max}} \in M_{\text{low-utility}}$. 
In this case, we must have $u(P_\ell) > w_\ell - u(j^{\text{max}}) \geq w_\ell - u(j^*) / k$ for each $\ell$. 
Summing this over $\ell \in [k]$, we get
\begin{align*}
u\left(P_1 \cup \cdots \cup P_k\right) > w_1 + \cdots + w_k - u(j^*) = u(M) - u(j^*).
\end{align*}
However, since $j^* \notin P_1 \cup \dots \cup P_k$, we have $u\left(P_1 \cup \dots \cup P_k\right) \leq u(M) - u(j^*)$, which is a contradiction.
\end{itemize}
In both cases, we arrive at a contradiction, and our proof is complete.
\end{proof}

With Lemma~\ref{lem:deterministic-condition-upper-bound} ready, we can now prove Theorem~\ref{thm:asymptotic-upper-bound}.

\begin{proof}[Proof of Theorem~\ref{thm:asymptotic-upper-bound}]
Since each $u(j)$ is drawn independently from the uniform distribution on $[0,1]$, the probability that $u(j^*)\geq 1/2$ is $1-1/2^m$, which converges to $1$ for large $m$.
In addition, since $u(j)\in [0.1/k, 0.5/k]$ with probability $0.4/k$ for each $j$, a standard Chernoff bound argument implies that with probability approaching $1$, we have 
\[
M' := |\{j \in M \mid u(j) \in [0.1/k, 0.5/k]\}| \geq 0.3 m/k. 
\]
The union bound implies that both events occur simultaneously with high probability.
Suppose that they both occur and $m \geq 40k^3$.
From the first event, we have $u(j)\leq 0.5/k\leq u(j^*)/k$ for each $j\in M'$, and so $M'\subseteq M_{\text{low-utility}}$.
Hence, the second event implies that
\[
\sum_{j \in M_{\text{low-utility}}} u(j) \geq \sum_{j \in M'} u(j) \geq (0.3m/k)(0.1/k) \geq  k \geq k \cdot u(j^*).
\]
From this and Lemma~\ref{lem:deterministic-condition-upper-bound}, we conclude that with high probability, we can efficiently find a consensus $k$-splitting that cuts only a single item, as claimed.
\end{proof}

\section{Conclusion}

In this paper, we studied a natural version of the consensus halving problem where, in contrast to prior work, the items do not have a linear structure.
We showed that computing a consensus halving with at most $n$ cuts in our version can be done in polynomial time for additive utilities, but already becomes PPAD-hard for a class of monotonic utilities that are very close to additive.
We also demonstrated several extensions and connections to the problems of consensus $k$-splitting and agreeable sets.

While our PPAD-hardness result serves as strong evidence that consensus halving for a set of items is computationally hard for non-additive utilities, it remains open whether the result can be strengthened to PPA-completeness---indeed, the membership of the problem in PPA follows from a reduction to consensus halving on a line, as explained in the introduction. Obtaining a PPA-hardness result will most likely require new ideas and perhaps even new insights into PPA, since all existing PPA-hardness results for consensus halving heavily rely on the linear structure.
Of course, it is also possible that the problem is in fact PPAD-complete.
In addition to consensus halving, settling the computational complexity of the agreeable sets problem for a non-constant number of agents with monotonic utilities would also be of interest.

Another important question that arises from our work is whether there always exists a consensus $k$-splitting with at most $(k-1)n$ cuts when items do not lie on a line and agents have monotonic utilities.
If these utilities are also additive, the claim holds by Theorem~\ref{thm:polytope-k-splitting}; for consensus halving, the claim also holds by reducing to the linear version.
However, for non-additive utilities and unequal ratios, this reduction technique no longer works: even for $k=2$, we may already need to make more than $n$ cuts on the line (Theorem~\ref{thm:k-splitting-more-cuts}).
From a broader point of view, our work illustrates the richness of consensus halving and related problems, which we believe deserve further study.

\section*{Acknowledgments}

This work was partially supported by the European Research Council (ERC) under grant number 639945
(ACCORD), by an EPSRC doctoral studentship (Reference 1892947), and by JST,
ACT-X.

\bibliographystyle{named}
\bibliography{main}

\appendix

\section{Constant Number of Agents}
\label{app:constant-n}

In this section, we provide additional results for the case where there are a constant number of agents who are endowed with monotonic utilities.

\subsection{Discrete Consensus Halving}
\label{app-subsec:discrete-consensus-halving}

We begin by introducing a discrete version of consensus halving, which allows us to focus solely on the agents' utilities for non-fractional sets of items.

\begin{definition}
\label{def:discrete-consensus-halving}
A \emph{discrete consensus halving} is a partition of the items into three (non-fractional) sets of items $M_0,M_1,M_2$ such that $u_i(M_0\cup M_1)\geq u_i(M_2)$ and $u_i(M_0\cup M_2)\geq u_i(M_1)$ for all $i\in N$.
\end{definition}

Note that for any $r$, a consensus halving with $r$ cuts yields a discrete consensus halving with $|M_0|\leq r$ simply by moving all cut items into $M_0$.
Hence, a discrete consensus halving with $|M_0|\leq n$ is guaranteed to exist.
The bound $n$ is also tight here: when each agent values a single distinct item, all valued items must be included in $M_0$.

The following result shows that for constant $n$, a discrete consensus halving with $|M_0|\leq n$ can be found efficiently.
Similarly to Theorem~\ref{thm:agreeable-existence}, the proof involves arranging the items on a line and appealing to the existence of a consensus halving with at most $n$ cuts on the line.
As in Theorem~\ref{thm:agreeable-algo}, we assume a utility oracle model in which the algorithm can query the utility $u_i(M')$ for any $i\in N$ and $M'\subseteq M$.

\begin{theorem}
\label{thm:discrete-consensus-halving-algo}
For any constant number of agents with monotonic utilities, there exists a polynomial-time algorithm that computes a discrete consensus halving with $|M_0|\leq \min\{n,m\}$ (assuming access to a utility oracle).
\end{theorem}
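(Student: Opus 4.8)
The plan is to turn the existence argument (arrange the items on a line and invoke the theorem of \citet{SimmonsSu03}) into an algorithm by brute-force search: for constant $n$, only polynomially many partitions can possibly arise from placing at most $n$ cuts on a fixed line, so we can generate all of them and test each against the utility oracle.

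First I would dispose of the trivial case $n \ge m$: taking $M_0 = M$ and $M_1 = M_2 = \emptyset$ gives a discrete consensus halving with $|M_0| = m = \min\{n,m\}$, since the utilities are nonnegative. So assume $n < m$ and fix an arbitrary ordering of the items, viewed as a line. I would then enumerate every triple $(M_0,M_1,M_2)$ with the following two properties: $|M_0| \le n$, and reading the items of $M \setminus M_0$ in the order of the line, membership switches between $M_1$ and $M_2$ at most $n$ times. There are $O(m^n)$ choices for $M_0$, then $O(m^n)$ choices for the (at most $n$) switch positions among the items of $M \setminus M_0$, and $2$ choices for which side comes first; hence $O(m^{2n})$ triples in total, which is polynomial in $m$ because $n$ is constant. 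For each triple I would query the oracle $O(n)$ times to check whether $u_i(M_0 \cup M_1) \ge u_i(M_2)$ and $u_i(M_0 \cup M_2) \ge u_i(M_1)$ for all $i \in N$, and output the first triple that passes (it automatically satisfies $|M_0| \le n$ by construction). Every step runs in polynomial time.

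The correctness argument is to exhibit one enumerated triple that passes. I would extend each $u_i$ to fractional sets continuously and monotonically (e.g., via the Lov\'{a}sz or multilinear extension, exactly as in the proof of Theorem~\ref{thm:agreeable-existence}); write $\widehat{u}_i$ for the extension and $x^{A}$ for the vector in $[0,1]^m$ of a fractional set $A$. By \citet{SimmonsSu03}, there is a consensus halving $(N_1, N_2)$ of the line with respect to $\widehat{u}_1, \dots, \widehat{u}_n$ using at most $n$ cuts; it splits the line into at most $n+1$ contiguous blocks assigned alternately to the two sides, and cuts at most $n$ items. Let $M_0^\star$ be the set of cut items and let $M_1^\star$ (resp.\ $M_2^\star$) be the uncut items lying in side-$1$ (resp.\ side-$2$) blocks. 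Then $|M_0^\star| \le n$, and reading $M \setminus M_0^\star$ along the line, the side changes at most once per block boundary, hence at most $n$ times; so $(M_0^\star,M_1^\star,M_2^\star)$ is one of the enumerated triples. It remains to check that it is a valid discrete consensus halving. Coordinatewise one has $\mathbf{1}_{M_0^\star \cup M_1^\star} \ge x^{N_1}$ (equality on every uncut item, while a cut item has value $1$ on the left and a value in $[0,1]$ on the right) and $x^{N_2} \ge \mathbf{1}_{M_2^\star}$ (equality on every uncut item, while a cut item has a value in $[0,1]$ on the left and value $0$ on the right), so monotonicity of $\widehat{u}_i$ together with $\widehat{u}_i(x^{N_1}) = \widehat{u}_i(x^{N_2})$ gives
\[
u_i(M_0^\star \cup M_1^\star) = \widehat{u}_i\!\left(\mathbf{1}_{M_0^\star \cup M_1^\star}\right) \ge \widehat{u}_i\!\left(x^{N_1}\right) = \widehat{u}_i\!\left(x^{N_2}\right) \ge \widehat{u}_i\!\left(\mathbf{1}_{M_2^\star}\right) = u_i(M_2^\star),
\]
and symmetrically $u_i(M_0^\star \cup M_2^\star) \ge u_i(M_1^\star)$, for every $i \in N$. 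Hence this triple passes all checks.

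The heart of the matter is really just bookkeeping, and no new idea beyond those in Theorems~\ref{thm:agreeable-existence} and~\ref{thm:agreeable-algo} is required. The one point that needs care is setting up the enumeration so that it provably contains the triple coming from the Simmons--Su solution --- in particular, bounding by $n$ the number of side-switches along $M \setminus M_0^\star$ despite the fact that blocks may be empty and that the same item may in principle be cut more than once --- together with the (routine) monotonicity argument showing that collapsing the fractional items into $M_0$ turns a fractional consensus halving into a discrete one. The reason the approach succeeds is simply that, with $n$ fixed, there are only polynomially many candidate line structures, so exhaustively checking them against the oracle is efficient.
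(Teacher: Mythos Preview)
Your proposal is correct and follows essentially the same approach as the paper's proof: fix a linear ordering, invoke the Simmons--Su existence result for the continuous extensions, round the fractional solution by pushing cut items into $M_0$, and brute-force over the polynomially many discrete structures that can arise from at most $n$ cuts on the line. The only difference is in the bookkeeping of the enumeration: the paper observes that, after fixing $M_0$ of size $t$, the remaining items fall into at most $t+1$ contiguous blocks each assigned wholesale to one side, yielding $\sum_{t} O(m^t \cdot 2^{t+1}) = O(m^n)$ candidates; you instead enumerate $M_0$ and the at most $n$ switch positions separately, giving $O(m^{2n})$ candidates. Both are polynomial for constant $n$, and your parametrisation has the mild advantage of transparently covering the case where some cuts fall between items (so that $|M_0^\star|$ is strictly smaller than the number of side-changes).
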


\begin{proof}
If $n\geq m$, we can simply include all items in $M_0$, so assume that $n \leq m$.
Arrange the items on a line in arbitrary order, and extend the utility functions of the agents to fractional sets of items in a continuous and monotonic fashion (see footnote~\ref{fn:monotonic-extension}).
Consider a consensus halving with respect to the extended utilities that uses at most $n$ cuts on the line, and move all cut items to $M_0$.
The resulting discrete consensus halving has the property that for any pair of consecutive items in $M_0$, the block of items in between either all belong to $M_1$ or all belong to $M_2$.

We perform a brute-force search over all possible partitions of the items into $M_0,M_1,M_2$ satisfying the above property.
For $t\in[n]$, there are $O(m^t)$ sets of items that we can choose as $M_0$, and for each choice of $M_0$, there are at most $2^{t+1}$ ways to assign the resulting blocks of items to $M_1$ or $M_2$.
Hence the brute-force search runs in time $\sum_{t=1}^nO(2^{t+1}m^t) = O(m^n)$, which is polynomial since $n$ is constant.
\end{proof}

With two agents, the algorithm in Theorem~\ref{thm:discrete-consensus-halving-algo} runs in quadratic time.
We next present a more sophisticated algorithm that uses only linear time for this special case.
In fact, we will show a stronger statement based on a notion introduced by \citet{KyropoulouSuVo19}.

\begin{definition}
\label{def:Exact1}
Let $n=2$. 
A partition of the items into two (non-fractional) sets of items $M_1$ and $M_2$ is said to be \emph{Exact1} if for each pair $i,k\in\{1,2\}$, either $M_{3-k}=\emptyset$ or there exists an item $j\in M_{3-k}$ such that $u_i(M_k) \geq u_i(M_{3-k}\backslash\{j\})$.
\end{definition}

In words, Exact1 means that for each agent and each part of the partition, this part can be made at least as valuable as the other part in the agent's view by removing at most one item from the latter part.
Given an Exact1 partition, we can easily obtain a discrete consensus halving as follows. From the partition, each agent $i$ proposes (at most) one item to include in $M_0$.
Specifically, if $u_i(M_1)< u_i(M_2)$, then agent $i$ proposes an item $j$ such that $u_i(M_1)\geq u_i(M_2\backslash\{j\})$; the opposite case is analogous. (If $u_i(M_1)=u_i(M_2)$, agent $i$ does not need to propose any item.)
It is clear that $|M_0|\leq 2$, and one can check that $(M_0,M_1,M_2)$ forms a discrete consensus halving.

\citet{KyropoulouSuVo19} showed that an Exact1 partition exists for two agents with ``responsive utilities'', a class that lies between additive and monotonic utilities.
Here, we present an algorithm that computes an Exact1 partition for arbitrary monotonic utilities in linear time---to the best of our knowledge, even the \emph{existence} of such a partition has not been established before.

Our algorithm is based on carefully discretizing a procedure of \citet{Austin82}, which computes a (non-discrete) consensus halving for two agents assuming that the resource is represented by the circumference of a circle.
Austin's procedure works by letting the first agent place two knives on the circle so that the item is cut in half according to her valuation.
The agent then moves both knives continuously clockwise, maintaining the invariant that the knives divide the items into two equal halves in her opinion. 
The first agent stops moving the knives when the two parts are equal according to the valuation of the second agent, and the procedure returns the resulting partition. 
Since the second knife would reach the initial position of the first knife at the same time as the first knife reaches the starting point of the second knife, it follows from the intermediate value theorem that the procedure necessarily terminates.

The main challenge in applying this procedure to our discrete item setting is that it is not a priori clear how to implement moving both knives simultaneously---indeed, moving each of the knives over one item does not always maintain the invariant that the partition is Exact1. 
Nevertheless, as we will show, this invariant can be maintained by either moving both knives or moving one of the two knives, whichever option is appropriate at each stage.
In fact, for this algorithm and proof, we will use a slightly stronger definition of Exact1 wherein the items lie on a circle, each part of the partition forms a contiguous block on the circle, and the item $j$ in Definition~\ref{def:Exact1} is only allowed to be one of the items at the end of block $M_{3-k}$.\footnote{A notion in the same spirit called ``envy-freeness up to one outer good'' was proposed by \citet{BiloCaFl19}.}
We say that a partition is \emph{Exact1 for agent $i$} if the (stronger) Exact1 condition is fulfilled for agent $i$ and both $k\in\{1,2\}$.

\begin{framed}
\noindent
\textbf{Algorithm~1} (for two agents with monotonic utilities) \\

\noindent
\emph{Step~1:} Arrange the items on a circle in arbitrary order.
Place the first knife between two arbitrary consecutive items on the circle, and the second knife between two items so that the partition induced by the two knives is Exact1 for the first agent.  \\

\noindent
\emph{Step~2:} If the current partition is Exact1 for the second agent, return this partition. \\

\noindent
\emph{Step~3:} If one of the knives is at the initial position of the other knife, go to Step~4. 
Else, perform one of the following actions so that the new partition remains Exact1 for the first agent:
\begin{enumerate}[label=(\alph*)]
\item Move the first knife clockwise by one position.
\item Move the second knife clockwise by one position.
\item Move each of the two knives clockwise by one position.
\end{enumerate}
Go back to Step~2. \\

\noindent
\emph{Step~4:} Move the knife that is not at the initial position of the other knife clockwise by one position. Go back to Step~2.
\end{framed}

\begin{theorem}
\label{thm:austin}
For two agents with monotonic utilities, Algorithm~1 computes an Exact1 partition in time linear in $m$ (assuming access to a utility oracle).
\end{theorem}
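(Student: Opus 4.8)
The plan is to formalize the knife-moving procedure described before the theorem statement and verify three things: that Step~1 can be carried out, that the invariant "the current partition is Exact1 for the first agent" can always be maintained in Step~3, and that the process terminates after at most $O(m)$ knife moves (each of which costs $O(1)$ oracle queries), yielding a partition that is Exact1 for both agents, hence an Exact1 partition.

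First I would set up notation for the circle of $m$ items and describe a knife configuration as an ordered pair of gaps, which partitions the items into two contiguous blocks $M_1$ and $M_2$. For Step~1, fix the first knife arbitrarily and sweep the second knife around the circle one position at a time; I would show via a discrete intermediate-value argument that there is a placement at which the partition is Exact1 for agent~1. Concretely, as the second knife moves, the quantity (agent~1's value for one block) minus (agent~1's value for the other) changes in a controlled way when a single item crosses from one block to the other, so between the configuration where the moved block is empty and the configuration where it is everything, there must be a step at which neither block can be made strictly more valuable than the other by more than one boundary item — this is exactly the Exact1 condition for agent~1.

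Next I would analyze the main loop. The key structural claim is: given any configuration that is Exact1 for agent~1 but not yet Exact1 for agent~2, at least one of the three moves (advance knife~1, advance knife~2, advance both) produces a new configuration that is again Exact1 for agent~1. This is the discrete analogue of Austin's continuous invariant, and proving it is the heart of the argument. I would argue by case analysis on how agent~1's balance between the two blocks sits relative to the boundary items: if the block $M_1$ is (weakly) too valuable, advancing the appropriate knife shrinks it; if too small, advancing the other knife (or both) restores the one-item slack. Because Exact1 for agent~1 is a "sandwich" condition — each block is within one end-item of the other — and each elementary move transfers exactly one end-item across one boundary, one can always pick the move that keeps both sandwiches valid. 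I expect this case analysis, and in particular handling the corner cases where a block is empty or a single item, to be the main obstacle; it is where Definition~\ref{def:Exact1}'s restriction to end-of-block items and the monotonicity of the $u_i$ are both used essentially.

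Finally I would handle termination and correctness. Orient both knives so they always move clockwise; track the total displacement. When one knife reaches the initial position of the other (Step~4), only the other knife moves, and after at most $m$ further single steps it reaches the initial position of the first knife — at which point the two blocks have swapped roles entirely. Consider agent~2's signed balance $\Phi$ between $M_1$ and $M_2$ along the whole trajectory: at the start $M_2$ (say) is empty so $\Phi$ has one sign, and at the end the blocks are swapped so $\Phi$ has the opposite sign (using monotonicity and $u_i(\emptyset)=0$). Since $\Phi$ changes by a bounded amount at each step, there is a step at which the partition is Exact1 for agent~2 — by the same discrete intermediate-value reasoning as in Step~1 — so the loop exits via Step~2 before the trajectory is exhausted. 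The total number of knife moves is $O(m)$, each requiring a constant number of utility-oracle queries to check the Exact1 conditions and decide which move to make, so the algorithm runs in time linear in $m$. Combining, Algorithm~1 returns a partition that is Exact1 for both agents, completing the proof.
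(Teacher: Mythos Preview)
Your overall architecture matches the paper's, but two of the three pillars have genuine gaps.

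\textbf{Step~3.} You assert that ``one can always pick the move that keeps both sandwiches valid'' and defer to a case analysis on whether $M_1$ is too large or too small for agent~1, but this is exactly the step that needs an argument, and the forward case split you sketch does not work: Exact1 is a two-sided condition, and advancing a single knife can fix one side while breaking the other. The paper's proof is a \emph{backward} case analysis: it assumes that actions~(a) and~(b) each fail to preserve Exact1 for agent~1, and from those two failure inequalities it derives precisely the two inequalities showing that action~(c) succeeds. That derivation (via $u_1(O_1)\ge u_1(M_1\setminus\{j'\})>u_1(M_2\setminus\{j\})=u_1(O_2\setminus\{j'\})$ and its symmetric counterpart) is the technical heart of the lemma, and your plan does not contain it.

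\textbf{Termination.} Your claim that ``at the start $M_2$ is empty'' is false: after Step~1 the second knife has already been placed so that the partition is Exact1 for agent~1, and an empty $M_2$ is generically \emph{not} Exact1 for agent~1 (the condition $u_1(\emptyset)\ge u_1(M\setminus\{j\})$ fails). The correct endpoint observation is simply that when both knives reach each other's initial positions the two blocks have swapped, so $\Phi$ changes sign. But a sign change of $\Phi$ does \emph{not} by itself give Exact1 for agent~2: Exact1 is a ``within one end-item'' condition, not ``$\Phi$ near $0$'', and a single step---especially action~(c), which moves two items at once---can jump over that window. The paper instead proves an invariant: assuming the algorithm never halts, the property ``the second bundle is not EF1 for agent~2'' is preserved at every step (because each step removes at most one end-item $j'$ from $M_1$, and the non-EF1 assumption gives $u_2(M_1\setminus\{j'\})>u_2(M_2)$, which makes the new first bundle EF1 and hence the new second bundle still non-EF1). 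This contradicts the swap at the end. Making your intermediate-value idea rigorous requires exactly this invariant, which your proposal does not supply.
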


\begin{proof}
Observe that throughout the algorithm, the partition induced by the two knives is Exact1 for the first agent. 
Moreover, a partition is returned only if it is Exact1 for the second agent. 
Hence, if the algorithm terminates, the partition that it outputs is Exact1 for both agents.
It therefore suffices to establish that the algorithm is well-defined and always terminates.
For convenience, we will say that a bundle is \emph{envy-free up to one item (EF1)} for a specific agent if the Exact1 condition (specifically, the stronger version described before the algorithm) is fulfilled for the agent when that bundle is taken as $M_k$.

First, we need to show that in Step~1, there exists a position of the second knife such that the resulting partition is Exact1 for the first agent.
It turns out that this already follows from Theorem~3.1 of \cite{OhPrSu19}, so the first step can be implemented.

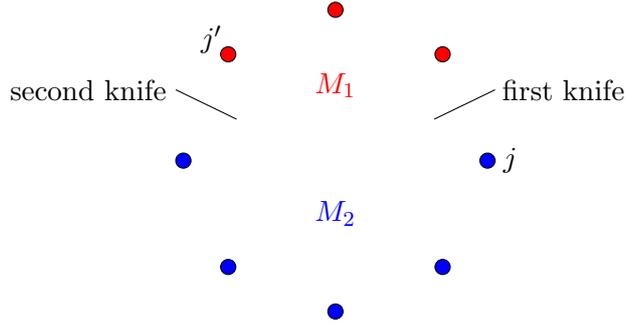
\begin{figure}[!ht]
\centering
\begin{tikzpicture}
\draw [fill=blue] (2,0) circle [radius = 0.1];
\draw [fill=red] (2,4) circle [radius = 0.1];
\draw [fill=blue] (0,2) circle [radius = 0.1];
\draw [fill=blue] (4,2) circle [radius = 0.1];
\draw [fill=red] (3.41,3.41) circle [radius = 0.1];
\draw [fill=red] (0.59,3.41) circle [radius = 0.1];
\draw [fill=blue] (3.41,0.59) circle [radius = 0.1];
\draw [fill=blue] (0.59,0.59) circle [radius = 0.1];
\draw (3.3,2.55) -- (4.1,2.94);
\draw (0.7,2.55) -- (-0.1,2.94);
\node[red] at (2,3) {$M_1$};
\node[blue] at (2,1.3) {$M_2$};
\node at (0.37,3.61) {$j'$};
\node at (4.3,2) {$j$};
\node at (5,2.94) {first knife};
\node at (-1.25,2.94) {second knife};
\end{tikzpicture}
\caption{An illustration of Algorithm~1.}
\label{fig:moving-knife}
\end{figure}

Next, the key part of our proof is to show that in Step~3, at least one of the three actions keeps the new partition Exact1 for the first agent. 
Assume that actions (a) and (b) do not; we claim that action (c) does.
Call the two parts of the partition $M_1$ and $M_2$, and assume without loss of generality that moving the first knife clockwise would enlarge $M_1$.
Suppose that the next item that the first knife would move over is $j$, and the next item that the second knife would move over is $j'$. (See Figure~\ref{fig:moving-knife} for an illustration.)
Let $O_1 = (M_1\cup \{j\})\backslash\{j'\}$ and $O_2 = (M_2\cup \{j'\})\backslash\{j\}$ be the two parts of the partition that results from action (c).
Since action (a) does not keep the partition Exact1, we have that $M_2\backslash\{j\}$ is not EF1 for the first agent.
Hence,
\[u_1(O_1) = u_1((M_1\cup \{j\})\backslash\{j'\}) > u_1(M_2\backslash\{j\}) = u_1(O_2\backslash\{j'\}),\]
implying that $O_1$ is EF1 for the first agent.
By symmetry, since action (b) does not keep the partition Exact1, we have that $M_1\backslash\{j'\}$ is not EF1 for the first agent. This implies that $O_2$ is EF1 for the agent.
It follows that action (c) keeps the partition Exact1 for the first agent, as claimed.

Now, consider Step~4.
Since each knife never moves by more than one position at a time, unless the algorithm terminates beforehand, this step will eventually be reached.
Suppose that the first knife has arrived at the initial position of the second knife, but the second knife is not yet at the initial position of the first knife.
The current partition is Exact1 for the first agent.
Also, if the second knife moves clockwise to the initial position of the first knife, again we have an Exact1 partition for the agent.
Hence, monotonicity of the EF1 property implies that every partition in between is also Exact1 for the agent.

Finally, we show that the algorithm necessarily terminates. 
Suppose that this is not the case.
Assume that in the initial partition with parts $M_1$ and $M_2$, the second agent believes that $M_2$ is not EF1.
This means that $u_2(M_2)<u_2(M_1\backslash\{j\})$ for any $j$ at the end of block $M_1$.
In one iteration of Step~3 or 4, $M_1$ loses at most one end item to $M_2$---call this item $j'$ (if $M_1$ does not lose any item, take $j'$ to be an arbitrary end item in $M_1$), and the respective parts of the partition after the iteration $O_1$ and $O_2$.
Since $u_2(M_1\backslash\{j'\}) > u_2(M_2) = u_2((M_2\cup\{j'\})\backslash\{j'\})$, we have that $O_1$ is also EF1 for the second agent.
However, since the algorithm does not terminate here by assumption, $O_2$ is not EF1 for the second agent.
The same argument tells us that in further iterations, the second bundle (i.e., $M_2$, $O_2$, and so on) is still not EF1 for the agent.
However, the algorithm must reach a point where the first knife is at the initial position of the second knife and, at the same time, the second knife is also at the initial position of the first knife.
At this point, the second bundle coincides with the initial first bundle, so it must be EF1 for the second agent.
This yields the desired contradiction.

Regarding the running time, note that each knife moves clockwise around the circle only once, so the number of partitions considered by the algorithm is linear. 
For each partition, checking the relevant Exact1 condition can be done in constant time since it involves hypothetically removing only a constant number of items.
Hence the algorithm runs in linear time, as claimed.
\end{proof}

\subsection{Continuous Extensions}
\label{app-subsec:continuous-extensions}

The discrete consensus halving problem allows us to concern ourselves exclusively with the agents' utilities for non-fractional sets of items, which are represented by set functions.
For an additive set function, there exists an obvious extension to fractional sets of items: the linear extension used in Section~\ref{sec:additive}.
This is, however, not the case for general monotonic functions.
In this subsection, we address two extensions that have been studied in the literature, namely the Lov\'{a}sz extension and the multilinear extension.
We refer to the lecture notes of \citet{Vondrak10} for further discussion of these extensions.

Let $\textbf{x} = (x_1,\dots,x_m)$, and for each subset $S\subseteq[m]$, denote by $\textbf{1}_S$ the vector of length $m$ such that the $i$th component is $1$ if $i\in S$, and $0$ otherwise.

\begin{definition}
\label{def:lovasz-extension}
Given a function $f:\{0,1\}^m\rightarrow\mathbb{R}$, the \emph{Lov\'{a}sz extension} $f^L:[0,1]^m\rightarrow\mathbb{R}$ of $f$ is defined by
\[
f^L(\textbf{x}) = \sum_{i=0}^m \lambda_i f(S_i),
\]
where $\emptyset = S_0\subset S_1\subset\dots\subset S_m = [m]$ is a chain such that $\sum_{i=0}^m \lambda_i\textbf{1}_{S_i} = \textbf{x}$ for $\lambda_0,\lambda_1,\dots,\lambda_m\geq 0$ with $\sum_{i=0}^m \lambda_i = 1$.
\end{definition}

\noindent As an example, suppose that $m=3$ and $\textbf{x} = (1, 0.1, 0.3)$.
Then we have $S_1=\{1\}$, $S_2=\{1,3\}$, $S_3 = \{1,2,3\}$, and $(\lambda_0,\lambda_1,\lambda_2,\lambda_3) = (0,0.7,0.2,0.1)$, meaning that $$f^L(\textbf{x}) = 0.7\cdot f(\{1\}) + 0.2\cdot f(\{1,3\}) + 0.1\cdot f(\{1,2,3\}).$$

\begin{definition}
\label{def:multilinear-extension}
Given a function $f:\{0,1\}^m\rightarrow\mathbb{R}$, the \emph{multilinear extension} $F:[0,1]^m\rightarrow\mathbb{R}$ of $f$ is defined by
\[
F(\textbf{x}) = \sum_{S\subseteq [m]} f(S) \prod_{i\in S}x_i\prod_{i\in[m]\backslash S}(1-x_i).
\]
\end{definition}

\noindent For the example above, we have
\begin{align*}
F(\textbf{x}) 
&= 0.9\cdot 0.7\cdot f(\{1\})
+ 0.1\cdot 0.7\cdot f(\{1,2\})
+ 0.9\cdot 0.3\cdot f(\{1,3\})
+ 0.1\cdot 0.3\cdot f(\{1,2,3\}) \\
&= 0.63\cdot f(\{1\})
+ 0.07\cdot f(\{1,2\})
+ 0.27\cdot f(\{1,3\})
+ 0.03\cdot f(\{1,2,3\}).
\end{align*}

\citet{Vondrak10} proved that if $f$ is a monotonic set function, then its multilinear extension $F$ is also monotonic, i.e., increasing a component $x_i$ by any amount does not decrease the value of the function $F(\textbf{x})$.
For completeness, we show an analogous result for the Lov\'{a}sz extension.

\begin{proposition}
If a function $f:\{0,1\}^m\rightarrow\mathbb{R}$ is monotonic, then so is its Lov\'{a}sz extension $f^L$.
\end{proposition}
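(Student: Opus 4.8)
The plan is to reduce monotonicity of $f^L$ to monotonicity of $f$ via the standard ``layer-cake'' representation of the Lov\'{a}sz extension by level sets. The first step would be to establish that for every $\mathbf{x} \in [0,1]^m$,
\[
f^L(\mathbf{x}) = \int_0^1 f\bigl(\{i \in [m] : x_i \geq t\}\bigr)\, dt .
\]
To verify this I would relabel the coordinates so that $x_1 \geq x_2 \geq \dots \geq x_m$ (which changes neither side, since both $f^L$ and the integral are invariant under simultaneously permuting the coordinates of $\mathbf{x}$ and the argument of $f$), put $x_0 := 1$ and $x_{m+1} := 0$, and split $(0,1]$ into the intervals $(x_{k+1}, x_k]$ for $0 \le k \le m$. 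On $(x_{k+1},x_k]$ the level set $\{i : x_i \geq t\}$ is exactly $S_k := \{1,\dots,k\}$, so the integral evaluates to $(1-x_1)f(S_0) + \sum_{k=1}^{m-1}(x_k - x_{k+1})f(S_k) + x_m f(S_m)$. Setting $\lambda_0 = 1-x_1$, $\lambda_k = x_k - x_{k+1}$ for $1 \le k \le m-1$, and $\lambda_m = x_m$, one checks that the $\lambda_i$ are nonnegative, sum to $1$, and satisfy $\sum_{i=0}^m \lambda_i \mathbf{1}_{S_i} = \mathbf{x}$ (the last identity by telescoping), so by Definition~\ref{def:lovasz-extension} this expression is precisely $f^L(\mathbf{x})$.

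Granting the representation, the conclusion is immediate: if $\mathbf{x}, \mathbf{y} \in [0,1]^m$ with $x_i \le y_i$ for every $i$, then for each $t \in (0,1]$ we have $\{i : x_i \ge t\} \subseteq \{i : y_i \ge t\}$, whence $f(\{i : x_i \ge t\}) \le f(\{i : y_i \ge t\})$ by monotonicity of $f$; integrating this pointwise inequality over $t$ gives $f^L(\mathbf{x}) \le f^L(\mathbf{y})$. Since increasing a single component of $\mathbf{x}$ is a special case of $\mathbf{x} \le \mathbf{y}$, this shows that $f^L$ is monotonic.

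The essentially only nontrivial point is the first step --- recognizing that the sorted chain and its telescoping coefficients satisfy the hypotheses of Definition~\ref{def:lovasz-extension}. Coordinate ties deserve a brief remark: there the maximal chain is not unique, but $f^L(\mathbf{x})$ is still well defined and any sorted chain yields the same value, so the argument goes through unchanged. As an alternative that avoids the integral, one could instead expand the definition directly on each ``sorting cone'' $C_\sigma = \{\mathbf{x} : x_{\sigma(1)} \ge \dots \ge x_{\sigma(m)}\}$, obtaining $f^L(\mathbf{x}) = f(\emptyset) + \sum_{k=1}^m\bigl(f(S_k^\sigma) - f(S_{k-1}^\sigma)\bigr)x_{\sigma(k)}$ with all slopes nonnegative by monotonicity of $f$, and then combine piecewise linearity with nonnegative slopes with continuity of $f^L$ along the segment from $\mathbf{x}$ to $\mathbf{y}$ to reach the same conclusion.
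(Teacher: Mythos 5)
Your proof is correct, and your primary argument is genuinely different from the paper's. The paper works directly with the explicit sorted expression
\[
f^L(\mathbf{x}) = x_1 f(\{1,\dots,m\}) + (x_2 - x_1) f(\{2,\dots,m\}) + \dots + (x_m - x_{m-1}) f(\{m\})
\]
(for $x_1 \le \dots \le x_m$) and performs a \emph{local} perturbation: to show $f^L$ does not decrease when $x_i$ is increased, it increases $x_i$ only up to the next sorted coordinate $x_{i+1}$, observes that the only affected terms contribute a net change of $x_i \bigl(f(\{i,\dots,m\}) - f(\{i+1,\dots,m\})\bigr) \ge 0$, and then iterates by re-sorting. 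Your primary route instead establishes the level-set (layer-cake) representation $f^L(\mathbf{x}) = \int_0^1 f(\{i : x_i \ge t\})\,dt$, from which monotonicity is a one-line \emph{global} consequence: coordinatewise $\mathbf{x} \le \mathbf{y}$ gives nested level sets for every $t$, hence a pointwise inequality of integrands. The paper's argument is more self-contained (it never leaves the chain decomposition of the definition) but needs the mildly fussy ``increase until you hit the next coordinate, then re-sort'' bookkeeping; your argument front-loads all the work into verifying the integral identity, after which the monotonicity claim is immediate and in fact gives the stronger statement that $\mathbf{x} \le \mathbf{y}$ coordinatewise implies $f^L(\mathbf{x}) \le f^L(\mathbf{y})$ directly, not just single-coordinate monotonicity. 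Your parenthetical alternative---$f^L$ is affine on each sorting cone with nonnegative slopes $f(S_k) - f(S_{k-1})$, plus continuity across cone boundaries---is essentially a restatement of the paper's approach and is the cleanest way to see that the two arguments are reconciled. One small thing worth making explicit in your main route: the identity $\sum_{k \ge j} \lambda_k = x_j$ that justifies $\sum_i \lambda_i \mathbf{1}_{S_i} = \mathbf{x}$ is the telescoping sum you allude to, and it is exactly what makes the $\lambda_i$ a valid witness for Definition~\ref{def:lovasz-extension}; you assert this correctly but it is the crux, so spelling it out would strengthen the write-up.
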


\begin{proof}
Let $f$ be a monotonic set function, and $f^L$ be its Lov\'{a}sz extension.
Let $\textbf{x}\in[0,1]^m$, and assume that $x_1\leq x_2\leq\dots\leq x_m$ (other orderings can be handled analogously). 
In this case, we have
\begin{align*}
f^L(\textbf{x})
=\hspace{0.3em} & x_1f(\{1,2,\dots,m\}) + (x_2-x_1)f(\{2,3,\dots,m\}) + \dots \\
&+ (x_i-x_{i-1})f(\{i,\dots,m\}) + (x_{i+1}-x_i)f(\{i+1,\dots,m\}) + \dots \\
&+ (x_m-x_{m-1})f(\{m\}).
\end{align*}
It suffices to show that for any $i$, the value $f^L(\textbf{x})$ does not decrease upon increasing $x_i$.
This is obvious if $i=m$.
For $1\leq i\leq m-1$, we only need to prove that $f^L(\textbf{x})$ does not decrease when we increase $x_i$ until it reaches $x_{i+1}$---indeed, if we want to increase $x_i$ further, we can swap the roles of $x_i$ and $x_{i+1}$ and apply the same argument.
When we increase $x_i$ in the range $[x_{i-1},x_{i+1}]$, the only terms that change are $x_i\cdot f(\{i,\dots,m\})$ and $-x_i\cdot f(\{i+1,\dots,m\})$.
The net change is
\[
x_i\cdot (f(\{i,\dots,m\}) - f(\{i+1,\dots,m\})),
\]
which is nonnegative due to the monotonicity of $f$.
The conclusion follows.
\end{proof}

When $n$ is constant, computing a consensus halving for a utility function given by the Lov\'{a}sz extension of a monotonic set function can be done efficiently.

\begin{theorem}
\label{thm:lovasz-extension-algo}
For a constant number of agents with monotonic utilities, each given by the Lov\'{a}sz extension of a set function, there exists a polynomial-time algorithm that computes a consensus halving with at most $\min\{n,m\}$ cuts (assuming access to a utility oracle for the set function).
\end{theorem}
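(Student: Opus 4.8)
The plan is to combine the existence of a low-cut consensus halving when items are placed on a line (as in the proof of Theorem~\ref{thm:discrete-consensus-halving-algo}) with a brute-force search over polynomially many combinatorial configurations, plus one extra brute-force layer to cope with the piecewise-linear nature of the Lov\'asz extension. As usual, the case $n \ge m$ is immediate: placing a fraction $1/2$ of every item in each part gives $u_i(M_1) = f_i^L(\tfrac{1}{2}, \dots, \tfrac{1}{2}) = u_i(M_2)$ for every $i$, using $m = \min\{n,m\}$ cuts, so I may assume $n \le m$. First I would arrange the items on a line in an arbitrary fixed order; since each $f_i^L$ is continuous (and monotonic, by the preceding proposition), the existence theorem yields a consensus halving of the extended utilities using at most $n$ cuts on the line. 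Exactly as in the proof of Theorem~\ref{thm:discrete-consensus-halving-algo}, such a solution can be described by a \emph{combinatorial type} of polynomial size: the set $C$ of at most $n$ cut items, together with an assignment of each of the at most $|C|+1$ maximal runs of uncut items (delimited by the items of $C$ and by the two ends of the line) to one of the two parts. There are $\sum_{t \le n} O(m^t) \cdot 2^{t+1} = O(m^n)$ such types, which is polynomial since $n$ is constant.

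The extra ingredient is that, for a fixed combinatorial type, the map sending the vector $\mathbf{y} = (y_c)_{c \in C} \in [0,1]^{C}$ of cut fractions (with $y_c$ the fraction of item $c$ in the first part) to $u_i(M_1) - u_i(M_2)$ is only \emph{piecewise} linear, with the pieces governed by the relative order of the coordinates of $\mathbf{y}$. I would therefore also iterate over the $O(|C|!) = O(1)$ orderings $\sigma$ of $C$. On the region $R_\sigma := \{\mathbf{y} \in [0,1]^{C} : y_{\sigma(1)} \ge \dots \ge y_{\sigma(|C|)}\}$, the sorted order of all $m$ coordinates of the first part's vector $\mathbf{1}_{B_1} + \sum_{c \in C} y_c \mathbf{e}_c$ (with $B_1$ the set of uncut items assigned to the first part, determined by the type) is fixed; hence the chain underlying the Lov\'asz extension is fixed, and $u_i(M_1) = f_i^L(\cdot)$ is an affine function of $\mathbf{y}$ whose coefficients can be computed from polynomially many oracle queries, and likewise for $u_i(M_2)$. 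Thus the conditions $u_i(M_1) = u_i(M_2)$ for $i \in [n]$, restricted to $R_\sigma \cap [0,1]^{C}$, form a linear feasibility problem in at most $n$ variables, solvable in polynomial time; the algorithm returns the partition from the first feasible solution found over all pairs (type, $\sigma$).

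Correctness follows because the line consensus halving promised by the existence theorem realizes some combinatorial type, some ordering $\sigma$ of its cut fractions, and a point of the corresponding $R_\sigma$ satisfying all the equations, so at least one feasibility problem is feasible; conversely, any feasible solution yields a consensus halving cutting at most $|C| \le n = \min\{n,m\}$ items. Since there are polynomially many pairs (type, $\sigma$), and each linear feasibility problem involves polynomially many arithmetic operations and oracle calls, the total running time is polynomial.

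The step I expect to be the main obstacle is handling the piecewise-linearity of the Lov\'asz extension cleanly: one must verify that on each region $R_\sigma$ the extension is genuinely affine in $\mathbf{y}$, which amounts to noting that the chain $\emptyset = S_0 \subset S_1 \subset \dots \subset S_m$ defining $f_i^L$ depends only on the type and on $\sigma$ (the coordinates fixed to $0$ or $1$ together with the order $\sigma$ of the remaining coordinates determine the sorted order, up to ties that carry weight $0$), and then that the corresponding weights $\lambda_0, \dots, \lambda_m$ are affine in $\mathbf{y}$. What makes the approach work at all is that the number of these pieces is $O(|C|!)$, a constant because $n$ is constant; this is also where the argument fails for non-constant $n$, consistent with that case being left open.
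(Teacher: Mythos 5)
Your proposal is correct and follows essentially the same approach as the paper's proof: arrange the items on a line, brute-force over the $O(m^n)$ combinatorial types (cut set plus block assignment), iterate over the constantly many orderings of the cut fractions so that the Lov\'asz extension becomes affine, and solve the resulting linear feasibility problem. The only difference is that you spell out in more detail why fixing the ordering fixes the chain and makes the extension affine, which the paper leaves as a ``one can verify'' step.
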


\begin{proof}
If $n\geq m$, we can simply divide every item in half, so assume that $n\leq m$. 
Arrange the items on a line in arbitrary order.
Similarly to the proof of Theorem~\ref{thm:discrete-consensus-halving-algo}, there exists a consensus halving that uses at most $n$ cuts on the line such that for any pair of consecutive cut items, the block of whole items in between either all belong to $M_1$ or all belong to $M_2$.
We will perform a brute-force search over all partitions of items into $(M_0,M_1,M_2)$ such that all cut items belong to $M_0$ and the above property is satisfied; as in Theorem~\ref{thm:discrete-consensus-halving-algo}, this search takes polynomial time.

For each such partition, it remains to determine the ratios by which we should divide the items in $M_0$ between $M_1$ and $M_2$.
Denote by $x_1,\dots,x_r$ the fraction of the $r\leq n$ items in $M_0$ that should go into $M_1$.
We iterate over all possible orderings of $x_1,\dots,x_r$---there are at most $n!$ orderings, which is polynomial since $n$ is constant.
For each ordering, one can verify that the consensus-halving condition for each agent reduces to a linear equation in $x_1,\dots,x_r$.
Hence, to check the feasibility of a partition along with an ordering, we can run any efficient linear programming algorithm (with an arbitrary objective) on the ordering and consensus-halving constraints.
The previous paragraph implies that at least one combination of partition and ordering results in a feasible linear program, which in turn gives rise to the desired consensus halving.
\end{proof}

A consequence of Theorem~\ref{thm:lovasz-extension-algo} is that for the Lov\'{a}sz extension, if the set function is rational, then there exists a consensus halving with rational ratios.
By contrast, for the multilinear extension, a consensus halving may necessarily involve splitting items in irrational ratios, even if the set function only takes on integer values.

\begin{theorem}
\label{thm:multilinear-irrational}
There exists an instance with $n=2$ and $m=3$ in which each agent has a monotonic utility function given by the multilinear extension of a set function taking on integer values, but every consensus halving with at most two cuts involves splitting some items in irrational ratios.
\end{theorem}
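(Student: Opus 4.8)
The plan is to reduce a consensus halving to a small polynomial system. For a fractional partition, let $x_j\in[0,1]$ be the fraction of item $j$ in $M_1$ and put $z_j:=2x_j-1\in[-1,1]$, so item $j$ is cut exactly when $z_j\in(-1,1)$. Expanding the multilinear extension $F$ of a set function $f$ on $\{0,1\}^3$ about the center $(1/2,1/2,1/2)$, every even-degree term in the $z_j$ cancels in $F(\mathbf{x})-F(\mathbf{1}-\mathbf{x})$, leaving
\[
4\bigl(F(\mathbf{x})-F(\mathbf{1}-\mathbf{x})\bigr)=a_1z_1+a_2z_2+a_3z_3+b\,z_1z_2z_3,
\]
with integers $a_j=\sum_{S\ni j}f(S)-\sum_{S\not\ni j}f(S)$ and $b=\sum_{S\subseteq[3]}(-1)^{3-|S|}f(S)$. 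Writing $a_j^{(i)},b^{(i)}$ for agent $i$'s coefficients, a partition is a consensus halving iff $z\in[-1,1]^3$ satisfies $a_1^{(i)}z_1+a_2^{(i)}z_2+a_3^{(i)}z_3+b^{(i)}z_1z_2z_3=0$ for $i=1,2$, and it uses at most two cuts iff some $z_j\in\{-1,1\}$. Note that $z=0$ (halving every item) is always a consensus halving, but it uses three cuts and is irrelevant; on the other hand a consensus halving with at most two cuts does exist, by laying the items on a line and invoking the Simmons--Su existence theorem for the continuous, monotonic multilinear extensions.

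I would then take agent $2$ to be additive with positive, pairwise-distinct integer singleton values; a short calculation shows this forces $b^{(2)}=0$ and $a_j^{(2)}\ne0$, so agent $2$'s equation is a linear constraint. A natural candidate to check is $f_2$ additive with values $1,2,3$ (equation $z_1+2z_2+3z_3=0$) together with $f_1(S)=1$ if $|S|\ge2$ and $f_1(S)=0$ otherwise, which is monotonic and integer-valued and gives $a^{(1)}=(2,2,2)$, $b^{(1)}=-2$, i.e.\ the equation $z_1+z_2+z_3=z_1z_2z_3$.

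Next I would run the case analysis. First, enumerate the (at most eight) vertices $z\in\{-1,1\}^3$ satisfying agent $2$'s equation and check none satisfies agent $1$'s equation, ruling out zero-cut solutions. Second, for each way of fixing two coordinates to $\pm1$, solve agent $2$'s equation for the remaining coordinate and check the value never lies in $(-1,1)$ while also satisfying agent $1$'s equation, ruling out one-cut solutions. Third, for each choice of a single non-cut coordinate pinned to $\pm1$, agent $2$'s linear equation expresses one free coordinate affinely in the other (with rational coefficients), and substituting into agent $1$'s equation --- which carries the product $z_1z_2z_3$ --- yields a genuine quadratic in one variable (genuine since $b^{(1)}\ne0$ and the $w_j$ and pinned value are nonzero); I would check that in each case the discriminant is either negative (no real solution in that configuration) or a positive non-square, so any real root, and hence the affinely-determined partner coordinate, is irrational. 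Combining: a consensus halving with at most two cuts exists, the first two steps show it uses neither zero nor one cut, so it uses exactly two cuts, and the third step forces its two cut fractions to be irrational, which proves the theorem.

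The main obstacle is purely combinatorial bookkeeping: pinning down one choice of integers (equivalently, one pair of monotonic integer-valued set functions) that simultaneously makes every relevant quadratic nondegenerate with a non-square (or negative) discriminant, empties the zero- and one-cut cases, and still leaves a genuine two-cut solution inside $(-1,1)^2$. Symmetry is the enemy here --- permutation-invariant coefficients make the solution set positive-dimensional and riddled with rational points --- so the instance must be chosen asymmetric enough; but since each set function has eight free integer values against only four coefficient targets, and monotonicity can be restored by adding a common constant to all values, there is ample room, and a small example such as the one above can be written down and verified by hand.
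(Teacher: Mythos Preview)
Your approach is correct and your proposed instance does work (I checked: pinning $z_1=\pm1$ gives $1=-3z_3^2$, no real solution; pinning $z_2=\pm1$ gives $z_3^2=1/3$, irrational with a partner $z_1=-2z_2\mp\sqrt{3}$ landing in $(-1,1)$ for the right sign; pinning $z_3=\pm1$ gives $z_2^2+z_2z_3-1=0$ with discriminant $5$, but then $z_1=-2z_2-3z_3$ falls outside $[-1,1]$). The zero- and one-cut cases also clear as you describe.

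The paper follows the same strategy---exhibit a concrete integer instance and reduce to a quadratic with non-square discriminant---but executes it more economically. Instead of making agent~2 additive and then brute-forcing six pinned-coordinate cases, the paper picks two agents that are mirror images under swapping items~2 and~3, and gives item~2 (resp.~3) such a large singleton value for agent~1 (resp.~2) that $u_1(\{2\})>u_1(\{1,3\})$ and $u_2(\{3\})>u_2(\{1,2\})$; this \emph{forces} items~2 and~3 to be cut, so item~1 is the uncut one with no casework. Subtracting the two (mirror-symmetric) consensus equations then gives $x_2=x_3$ immediately, collapsing everything to a single quadratic $4x_2^2+29x_2-13=0$ with discriminant $1049$.

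So the contrast is: you disable the cubic term for one agent via additivity and pay with a six-way case split; the paper keeps both agents non-additive but engineers a swap symmetry plus dominance inequalities to pin down the uncut item in one line. Your remark that ``symmetry is the enemy'' is half right---permutation invariance of a \emph{single} agent's coefficients would indeed create rational solution curves---but a symmetry \emph{between} the two agents is exactly what the paper exploits to shortcut the analysis.
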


\begin{proof}
Assume that $n=2$ and $m=3$. The utility functions of the agents are given in Table~\ref{table:utilities-example}.
Notice that the function of the second agent is the same as that of the first agent, except with the roles of items $2$ and $3$ reversed.

\begin{table}
\centering
\begin{tabular}{ | c | c | c | }
  \hline
  Set $S$ & $u_1(S)$ & $u_2(S)$ \\ \hline 
  \hline
  $\emptyset$ & $0$ & $0$ \\ \hline
  $\{1\}$ & $1$ & $1$  \\ \hline
  $\{2\}$ & $10$ & $2$ \\ \hline
  $\{3\}$ & $2$ & $10$ \\ \hline
  $\{1,2\}$ & $12$ & $3$ \\ \hline
  $\{1,3\}$ & $3$ & $12$ \\ \hline
  $\{2,3\}$ & $14$ & $14$ \\ \hline
  $\{1,2,3\}$ & $20$ & $20$ \\ 
  \hline
\end{tabular}
\caption{Utility functions for the instance in the proof of Theorem~\ref{thm:multilinear-irrational}.}
\label{table:utilities-example}
\end{table}

Consider a consensus halving $(M_1,M_2)$ of this instance with at most two cuts.
Since $u_1(2) > u_1(\{1,3\})$, item $2$ needs to be cut.
Similarly, since $u_2(3) > u_2(\{1,2\})$, item $3$ needs to be cut.
Hence item $1$ must be uncut; assume without loss of generality that it belongs to $M_1$.
Let $x_2$ and $x_3$ be the fraction of item $2$ and $3$ in $M_1$, respectively.
Since $u_1(M_1) = u_1(M_2)$, we have
\begin{align*}
&(1-x_2)(1-x_3)\cdot u_1(1) + x_2(1-x_3)\cdot u_1(\{1,2\}) + x_3(1-x_2)\cdot u_1(\{1,3\}) + x_2x_3\cdot u_1(\{1,2,3\}) \\
&= x_2x_3\cdot u_1(\emptyset) + x_3(1-x_2)\cdot u_1(2) + x_2(1-x_3)\cdot u_1(3) + (1-x_2)(1-x_3)\cdot u_1(\{2,3\}).
\end{align*}
This is equivalent to
\[
(1-x_2)(1-x_3)\cdot (-13) + x_2(1-x_3)\cdot 10 + x_3(1-x_2)\cdot (-7) + x_2x_3\cdot 20 = 0,
\]
or 
\begin{equation}
-13+23x_2+6x_3+4x_2x_3 = 0.
\label{eq:agent-1}
\end{equation}
By symmetry, $u_2(M_1) = u_2(M_2)$ implies that 
\begin{equation}
-13+23x_3+6x_2+4x_2x_3 = 0.
\label{eq:agent-2}
\end{equation}
Subtracting \eqref{eq:agent-2} from \eqref{eq:agent-1} yields $17x_2 = 17x_3$, so $x_2=x_3$.
Plugging this back into \eqref{eq:agent-1}, we get 
\begin{equation}
4x_2^2+29x_2-13 = 0.    
\label{eq:quadratic}
\end{equation}
The only positive solution to \eqref{eq:quadratic} is $x_2 = \frac{\sqrt{1049}-29}{8}\approx 0.4235\dots$, meaning that every consensus halving involves splitting items $2$ and $3$ in irrational ratios.
\end{proof}

Theorem~\ref{thm:multilinear-irrational} implies that for the multilinear extension, computing a consensus halving exactly may not be possible if our computation model only allows representing rational numbers.
As we can see, with two agents and two necessary cuts, the problem already requires solving a quadratic equation.
For more agents, we can therefore expect that one would need to solve higher-degree polynomial equations---the Abel-Ruffini theorem states that almost all polynomials of degree at least five do not admit a solution in radicals.
Hence, for this extension, finding an approximate consensus halving is likely the best that one could do even under general computational models.

\section{Proof of Lemma~\ref{lem:simple-g-circuit}}
\label{app:proof-g-circuit}

We reduce from the \textsc{$\varepsilon$-Gcircuit} problem, which is known to be PPAD-hard even for some constant $\varepsilon > 0$ \citep{Rubinstein18}. In this problem we are given a generalized circuit $(V,\mathcal{T})$, where there are $9$ gate types: $G_\zeta$, $G_{\times \zeta}$, $G_=$, $G_+$, $G_-$, $G_<$, $G_\lor$, $G_\land$ and $G_\lnot$
with $\zeta\in[0,1]$ for the first two gates (see \citep{Rubinstein18} for a formal definition of the gates). The last three gate types correspond to Boolean operations. As shown by \citet[Corollary 1]{SchuldenzuckerSe19}, these three gate types are actually not necessary, and the problem remains PPAD-hard for constant $\varepsilon$ even without them. Apart from the set of gates, the other difference with \textsc{$\varepsilon$-simple-Gcircuit} is that in \textsc{$\varepsilon$-Gcircuit} we want to assign a number in $[0,1]$ to each node (instead of $[-1,1]$).
	
Let $\widehat{\varepsilon} > 0$ be a constant such that the \textsc{$\widehat{\varepsilon}$-Gcircuit} problem without Boolean operation gates is PPAD-hard, and let $(V,\mathcal{T})$ be an instance of \textsc{$\widehat{\varepsilon}$-Gcircuit} without Boolean gates. We construct an instance $(V',\mathcal{T}')$ of \textsc{$\varepsilon$-simple-Gcircuit}, where $\varepsilon>0$ is a sufficiently small constant (which we pick later), such that any solution to the new instance yields a solution to the original instance. We let $V' = V \cup V_{aux}$, where $V_{aux}$ is a set of nodes that will be used for ``intermediate'' results when simulating the gates of the original problem with the restricted set of gates allowed in \textsc{$\varepsilon$-simple-Gcircuit}. We will construct $\mathcal{T}'$ such that it induces the original constraints of $\mathcal{T}$ on the nodes $V \subset V'$. Furthermore, we will also ensure that in any solution $\xx : V' \to [-1,1]$, we have $\xx[v] \in [0,1]$ for all $v \in V \subset V'$. Thus, restricting $\xx$ to $V$ will immediately yield a solution to the original \textsc{$\widehat{\varepsilon}$-Gcircuit} instance.
	
Recall that we only have three types of gates at our disposal: $G_+$, $G_{\times -|\zeta|}$ for $\zeta \in (0,1]$, and $G_1$. We begin by constructing some useful gadgets that simulate more operations on the same domain $[-1,1]$. Throughout, we denote the input nodes by $u_1,u_2$ (if applicable) and the output node by $v$.
	
\paragraph{$G_{\times \zeta}$: multiplication by $\zeta \in [-1,1]$.} This gadget ensures that $\xx[v] = \zeta \cdot \xx[u_1] \pm 2\varepsilon$. If $\zeta < 0$, use a $G_{\times -|\zeta|}$ gate with input $u_1$ and output $v$. If $\zeta > 0$, use a $G_{\times -|\zeta|}$ gate with input $u_1$ and output $w \in V_{aux}$, and then a $G_{\times -|1|}$ gate with input $w$ and output $v$, which ensures that $\xx[v] = \zeta \cdot \xx[u_1] \pm 2\varepsilon$. Finally, if $\zeta = 0$, then use a $G_{\times -|1|}$ gate with input $u_1$ and output $w \in V_{aux}$, and then a $G_+$ gate with inputs $u_1,w$ and output $v$. This ensures that $\xx[v] = 0 \pm 2\varepsilon$.
	
\paragraph{$G_\zeta$: constant $\zeta \in [-1,1]$.} This gadget ensures that $\xx[v] = \zeta \pm 3\varepsilon$. We use a $G_1$ gate with output $w \in V_{aux}$, and then a $G_{\times \zeta}$ gadget with input $w$ and output $v$, which yields the desired result.

\paragraph{$G_{\times 2}$: multiplication by $2$.} This gadget ensures that $\xx[v] = T_{[-1,1]}(2 \xx[u_1]) \pm 3\varepsilon$. We use a $G_{\times 1}$ gadget with input $u_1$ and output $w \in V_{aux}$, and then a $G_+$ gate with inputs $u_1,w$ and output $v$, which yields the desired result.

\bigskip
	
Before we show how to construct gadgets that simulate the gates of \textsc{$\widehat{\varepsilon}$-Gcircuit}, we need a way to ensure that for $v \in V \subset V'$, we have $\xx[v] \in [0,1]$. To achieve this we will make extensive use of the following gadget.
	
\paragraph{$G_{[0,1]}$: truncation to $[0,1]$.} This gadget ensures that $\xx[v] \in [0,1]$ and $\xx[v] = T_{[0,1]}(\xx[u_1]) \pm 16\varepsilon$. To achieve this we use the fact that for any $t \in [-1,1]$, it holds that $T_{[0,1]}(t) = T_{[-1,1]}[t + (-1)] + 1$. First, we use a $G_{-1}$ gadget with output $w_1 \in V_{aux}$, and then a $G_+$ gate with inputs $u_1,w_1$ and output $w_2 \in V_{aux}$. Next, we use a $G_1$ gate with output $w_3 \in V_{aux}$, and then a $G_+$ gate with inputs $w_2,w_3$ and output $w_4 \in V_{aux}$. Since, the $G_{-1}$ gadget has error at most $3\varepsilon$ and the $G_+$ and $G_1$ gates have error at most $\varepsilon$, we obtain that $\xx[w_4] = T_{[0,1]}(\xx[u_1]) \pm 6\varepsilon$. Furthermore, it holds that $\xx[w_4] \geq -2\varepsilon$, since $\xx[w_4] = T_{[-1,1]}(\xx[w_2]+\xx[w_3]) \pm \varepsilon$, $\xx[w_2] \in [-1,1]$ and $\xx[w_3] \geq 1 - \varepsilon$. Finally, we also use a $G_{6\varepsilon}$ gadget with output $w_5 \in V_{aux}$, and a $G_+$ gate with inputs $w_4,w_5$ and output $v$. This introduces an additional error of at most $4\varepsilon$ and thus ensures that $\xx[v] = T_{[-1,1]}(T_{[0,1]}(\xx[u_1]) + 6\varepsilon) \pm 10\varepsilon = T_{[0,1]}(\xx[u_1]) \pm 16\varepsilon$. Furthermore, it also holds that $\xx[v] \geq T_{[-1,1]}(\xx[w_4] + \xx[w_5]) - \varepsilon \geq 0$, since $\xx[w_4] \geq -2\varepsilon$ and $\xx[w_5] \geq 6\varepsilon - 3\varepsilon$.

\bigskip
	
We are now ready to simulate the constraints $\mathcal{T}$ of the original instance on the nodes $V \subset V'$. First of all, for any node $v \in V$ that does not appear as the output of any gate in $\mathcal{T}$, we ensure that $\xx[v] \in [0,1]$ as follows: create a node $w \in V_{aux}$ and use a $G_{[0,1]}$ gadget with input $w$ and output $v$. Note that we do not care about the error in this case, since we only want to ensure that $\xx[v] \in [0,1]$. For all $v \in V$ that appear as the output of some gate in $\mathcal{T}$, the gadget that outputs into $v$ will ensure that $\xx[v] \in [0,1]$.
	
For every gate $T=(G,u_1,u_2,v,\zeta) \in \mathcal{T}$, we ensure that the corresponding constraint holds as follows:
	
\paragraph{$(G_\zeta^{[0,1]}, nil,nil,v,\zeta)$: constant $\zeta \in [0,1]$.} This gadget ensures that $\xx[v] \in [0,1]$ and $\xx[v] = \zeta \pm 19\varepsilon$. We use a $G_\zeta$ gadget with output $w \in V_{aux}$, and then a $G_{[0,1]}$ gadget with input $w$ and output $v$.
	
\paragraph{$(G_{\times \zeta}^{[0,1]},u_1,nil,v,\zeta)$: multiplication by $\zeta \in [0,1]$.} This gadget ensures that $\xx[v] \in [0,1]$ and $\xx[v] = T_{[0,1]}(\zeta \cdot \xx[u_1]) \pm 18\varepsilon$. We use a $G_{\times \zeta}$ gadget with input $u_1$ and output $w \in V_{aux}$, and then a $G_{[0,1]}$ gadget with input $w$ and output $v$.

\paragraph{$(G_=^{[0,1]},u_1,nil,v,nil)$: copy.} This gadget ensures that $\xx[v] \in [0,1]$ and $\xx[v] = T_{[0,1]}(\xx[u_1]) \pm 16\varepsilon$. For this we simply use the $G_{[0,1]}$ gadget with input $u_1$ and output $v$.
	
\paragraph{$(G_+^{[0,1]},u_1,u_2,v,nil)$: addition.} This gadget ensures that $\xx[v] \in [0,1]$ and $\xx[v] = T_{[0,1]}(\xx[u_1] + \xx[u_2]) \pm 17\varepsilon$. We use a $G_+$ gate with inputs $u_1,u_2$ and output $w \in V_{aux}$, and then a $G_{[0,1]}$ gadget with input $w$ and output $v$.
	
\paragraph{$(G_-^{[0,1]},u_1,u_2,v,nil)$: subtraction.} This gadget ensures that $\xx[v] \in [0,1]$ and $\xx[v] = T_{[0,1]}(\xx[u_1] - \xx[u_2]) \pm 18\varepsilon$. We use a $G_{\times -|1|}$ gate with input $u_2$ and output $w_1 \in V_{aux}$, then a $G_+$ gate with inputs $u_1,w_1$ and output $w_2 \in V_{aux}$, and finally a $G_{[0,1]}$ gadget with input $w_2$ and output $v$.
	
\paragraph{$(G_<^{[0,1]},u_1,u_2,v,nil)$: comparison.} This gate ensures that $\xx[v] \in [0,1]$ and
\begin{itemize}
	\item if $\xx[u_1] < \xx[u_2] - \widehat{\varepsilon}$, then $\xx[v] = 1 \pm 19\varepsilon$,
	\item if $\xx[u_1] > \xx[u_2] + \widehat{\varepsilon}$, then $\xx[v] = 0 \pm 19\varepsilon$.
\end{itemize}
We use a $G_{\times -|1|}$ gate with input $u_1$ and output $w \in V_{aux}$, and then a $G_+$ gate with inputs $u_2,w$ and output $w_0 \in V_{aux}$. This ensures that $\xx[w_0] = T_{[-1,1]}(\xx[u_2] - \xx[u_1]) \pm 2\varepsilon$. Let $k = \lceil \log_2 (1/\widehat{\varepsilon}) \rceil + 1$, so $2/\widehat{\varepsilon}\leq 2^k\leq 4/\widehat{\varepsilon}$. Next, for $i \in [k]$, we use a $G_{\times 2}$ gadget with input $w_{i-1}$ and output $w_i \in V_{aux}$. Finally, we use a $G_{[0,1]}$ gadget with input $w_k$ and output $v$.
	
For the analysis, we first consider the case $\xx[u_1] < \xx[u_2] - \widehat{\varepsilon}$. Then, it holds that $\xx[w_0] \geq \widehat{\varepsilon} - 2\varepsilon \geq \widehat{\varepsilon} - 3\varepsilon$. First, let us show by contradiction that there must exist $i \in [k]$ such that $\xx[w_i] < 2 \xx[w_{i-1}] - 3\varepsilon$. Assume that for all $i \in [k]$ we have $\xx[w_i] \geq 2 \xx[w_{i-1}] - 3\varepsilon$. Then it follows that $\xx[w_k] \geq 2^k \widehat{\varepsilon} - 3\varepsilon \sum_{i=0}^k 2^i \geq 2^k \widehat{\varepsilon} - 3\varepsilon 2^{k+1} \geq 2\widehat{\varepsilon}/\widehat{\varepsilon} -3\varepsilon \cdot 8/\widehat{\varepsilon} = 2 - 24\varepsilon/\widehat{\varepsilon}$. If we ensure that $\varepsilon < \widehat{\varepsilon}/24$, then we obtain that $\xx[w_k] > 1$, which is impossible. Thus, let $i \in [k]$ be such that $\xx[w_i] < 2 \xx[w_{i-1}] - 3\varepsilon$. Recall that the $G_{\times 2}$ gadget ensures that $\xx[w_i] \geq T_{[-1,1]}(2\xx[w_{i-1}]) - 3\varepsilon$. Thus, it must be that $T_{[-1,1]}(2 \xx[w_{i-1}]) < 2 \xx[w_{i-1}]$, i.e., $2 \xx[w_{i-1}] \geq 1$. This implies that $\xx[w_i] \geq 1 - 3\varepsilon$ and thus $\xx[w_{i+1}] \geq T_{[-1,1]}(2-6\varepsilon) - 3\varepsilon \geq 1 - 3\varepsilon$, if we ensure that $\varepsilon \leq 1/6$. By induction, it follows that $\xx[w_k] \geq 1 - 3\varepsilon$ and thus $\xx[v] \geq 1 - 19\varepsilon$, as desired.
	
Now, consider the case $\xx[u_1] > \xx[u_2] + \widehat{\varepsilon}$. By an analogous argument, we obtain that it must be that $\xx[w_k] \leq -1 + 3\varepsilon$, and thus $\xx[v] = T_{[0,1]}(\xx[w_k]) \pm 16\varepsilon = 0 \pm 19\varepsilon$.
	
\bigskip
	
We can now finish the reduction. We set $\varepsilon = \widehat{\varepsilon}/25$. This ensures that all the assumptions we have made about $\varepsilon$ hold, and that all the gadgets that simulate the gates in $\mathcal{T}$ have error at most $\widehat{\varepsilon}$.

\end{document}